\documentclass[runningheads]{llncs}
\usepackage[T1]{fontenc}
\usepackage{graphicx}
\usepackage{algorithm}
\usepackage{algpseudocode}

\usepackage[T1]{fontenc}
\usepackage{graphicx}

\usepackage{amsmath,amssymb,amsfonts}
\usepackage{stmaryrd}
\usepackage[notextcomp]{stix}
\usepackage{times}
\usepackage{enumerate} 
\usepackage{todonotes}
\usepackage{tabularx}
\usepackage{multicol}
\usepackage{array}
\newcommand{\todoPL}[2]{\todo[color=yellow!50,#1]{\textbf{PL:}#2}}
\newcommand{\todoSZ}[2]{\todo[color=cyan,#1]{\textbf{SZ:}#2}}

\begin{document}
\title{Learning Shuffle Ideals with Membership Queries and Contrastive Examples}
%
%
\author{S.\ Mahmoud Mousawi\inst{1}\orcidID{0009-0002-8689-0560} \and
Pierluigi San Pietro\inst{2}\orcidID{0000-0002-2437-8716} \and
Sandra Zilles\inst{1,3}\orcidID{0000-0001-7834-8574}}
\authorrunning{S.M.\ Mousawi et al.}
%
\institute{University of Regina, Canada 
\email{mousawi.s.m@gmail.com,sandra.zilles@uregina.ca}\\
\and
DEIB-Politecnico di Milano and CNR-IEIIT, Italy\\
\email{pierluigi.sanpietro@polimi.it}
\and
Alberta Machine Intelligence Institute, Canada}
\maketitle              
\begin{abstract}
This paper studies learning of shuffle ideals with membership queries as well as with contrastive queries---a form of membership query that reveals not only whether a selected word $w$ is in the target language or not, but also provides a most similar word $w'$ that belongs to the target language if 
$w$ does not. 

        For both settings, it is shown that even some very simple classes of shuffle ideals cannot be learned efficiently. By contrast, we obtain positive learnability results for classes of shuffle ideals that meet certain structural conditions. In the case of membership queries, these structural conditions are related to the previously studied notion of universal words, and raise new questions in word combinatorics. In the case of contrastive queries, the structural conditions relate to partitioning sets of words that are listed in shortlex order.
\keywords{Shuffle ideals \and learning \and membership queries.}
\end{abstract}
%
%
%

%
\section{Introduction}
 
Due to their applications in modeling and processing structured data, classes of regular languages have been studied in computational learning theory for decades, in the context of PAC-learning \cite{DBLP:journals/cacm/Valiant84}, learning in the limit \cite{DBLP:journals/iandc/Gold67}, and learning from queries \cite{DBLP:journals/ml/Angluin88}. 

While the class of all regular languages is not (passively) learnable in the limit from only positive examples  \cite{DBLP:journals/iandc/Gold67} or (actively) learnable from membership queries, more encouraging results have been obtained with other forms of queries. For instance, it can be efficiently learned from a combination of membership and equivalence queries \cite{DBLP:journals/iandc/Angluin87}, and even from equivalence queries alone if the counterexamples provided to the learner follow a certain preference order \cite{DBLP:journals/jcss/IbarraJ91}. Here a membership query consists of a word $w$ of the learner's choice; the answer will tell the learner whether or not $w$ belongs to the language $L$ to be learned. Similarly, an equivalence query is a language $L'$; the learner will be told whether $L'=L$. In the negative case, the learner also receives a counterexample, i.e., a witness from the symmetric difference of $L$ and $L'$.

Since equivalence queries are not always 
practical, many learning-theoretic studies focus on (1) potentially easier to learn subclasses of regular languages, such as reversible languages \cite{DBLP:journals/jacm/Angluin82}, piecewise testable languages \cite{DBLP:journals/grammars/GarciaR04}, locally testable languages \cite{DBLP:journals/grammars/GarciaR04}, and regular pattern languages \cite{DBLP:journals/tcs/NesselL05}, as well as (2) variations on membership queries that promise more learning ability, such as various forms of correction queries \cite{DBLP:conf/icgi/Becerra-BonacheDT06}.

In terms of aspect (1), the focus of our work is on a seemingly very simple subclass of regular languages, namely the so-called \emph{shuffle ideals}~\cite{Sakarovitch_Imre_1997}. A shuffle ideal is defined by a non-empty finite set of words called generators, and consists of all words $w$ that contain at least one of the generators $u$ ($=u(1)u(2)\ldots u(n)$) as a scattered subword, in the sense that $w$ can be written as $\Sigma^* u(1) \Sigma^* u(2) \Sigma^* \dots \Sigma^* u(n) \Sigma^*$.\footnote{Throughout this paper, we assume $\Sigma$ is a fixed finite alphabet of size at least 2.} The Boolean closure of the class of shuffle ideals is known as the class of piecewise testable languages~\cite{DBLP:conf/automata/Simon75}.

As for aspect (2), we study learning of shuffle ideals both in the plain membership query setting and in the setting of \emph{contrastive queries}\/~\cite{mansouri2026formal}. A contrastive query is a membership query $w$, in which the answer is enhanced by a contrastive example $w'$. Here $w'$ has opposite label from $w$, meaning that one of the two words belongs to the target language $L$, while the other does not. Moreover, with respect to a fixed distance function $d$ over $\Sigma^*$, the word $w'$ must be among the closest words to $w$ that have opposite label from $w$. Such a contrastive example conveys to the learner the additional information that all words between $w$ and $w'$ (in terms of $d$) have the same label as $w$. This information is very helpful in learning in many settings~\cite{mansouri2026formal}, and makes contrastive queries stronger than correction queries (in which contrastive examples are only given for queries $w\notin L$~\cite{DBLP:conf/icgi/Tirnauca08}). Our paper is the first to study learning of formal languages under this notion of contrastive queries.

Previous research on the learnability of classes related to shuffle ideals focused on learning from random examples or identification in the limit from positive data, rather than query learning~\cite{DBLP:journals/jmlr/AngluinAEK13,DBLP:journals/grammars/GarciaR04}. These works considered very limited settings of shuffle ideals with a single generator or piecewise testable languages built from  generators  whose length is bounded by a constant.
The reason for these limitations is that shuffle ideals, despite their very simple definition, are very difficult to learn. We show, among other results, that shuffle ideals can be learned efficiently with membership queries, if they are known to have exactly two generators, but not if they are known to have exactly three generators. In doing so, we reveal a connection to an interesting word-combinatoric problem based on nearly universal words~\cite{DBLP:conf/dcfs/FleischmannHHMN22}. A nearly $n$-universal word is a word $s$ that contains all words of length $n$ as scattered subwords, except one specified such word; Fleischmann et al.\ also extend this notion to a larger set of forbidden words of length $n$. We generalize their notion of near-universality to a word-combinatoric notion we call $(\Sigma^n\setminus W)$-universal sets. Roughly speaking, the computability of ``small'' $(\Sigma^n\setminus W)$-universal sets is a sufficient condition for efficient learning of subclasses of shuffle ideals with membership queries. When exactly our sufficient condition for learnability is met is an open problem that thus translates into a nice word-combinatoric question.

A second part of our paper studies the potential gain one can achieve from contrastive queries (compared to classical membership queries), when learning classes of shuffle ideals. We show 
that the class of all shuffle ideals for which 
all generators have the same length can be learned from a polynomial number of contrastive queries. Our strategy is based on a shortlex distance function and makes use of the fact that a first generator is always easy to find. This tells the learner the length of the remaining generators. With the shortlex ordering, previously found generators then partition the set of potential further generators into intervals w.r.t.\ the shortlex order. These intervals can then be searched efficiently using contrastive queries. This interval search strategy 
extends to other non-trivial classes of shuffle ideals.

In sum, our work is the first to provide a detailed study on learning shuffle ideals with membership queries, and opens up new avenues for positive yet non-trivial learning results by invoking contrastive queries. The new purely word-combinatoric problem defined in this context is interesting in its own right, 
beyond the field of learning theory.


\section{Preliminaries}

Throughout this paper, we fix a finite alphabet $\Sigma$ of size at least 2, and denote the set of all finite words over $\Sigma$ by $\Sigma^*$. For any $n\in\mathbb{N}$, we use $\Sigma^{\ge n}$, $\Sigma^{\le n}$, and $\Sigma^n$, resp., to denote the set of all words (over $\Sigma$) of length at least $n$, at most $n$, and equal to $n$, resp. We also use $\Sigma^*=\Sigma^{\ge 0}$ and $\Sigma^+=\Sigma^{\ge 1}$. If $w\in\Sigma^*$ and $a\in\Sigma$, then $|w|_a$ denotes the number of occurrences of the letter $a$ in $w$.
If $\mathcal{L}$ is any class of languages over $\Sigma^*$, then $\overline{\mathcal{L}}$  denotes the class of all complements of languages in $\mathcal{L}$. 

\subsection{Shuffle Ideals}

Let $\sqsubseteq$ be the binary relation on $\Sigma^*$ such that $u \sqsubseteq v$ if $u$ is a (possibly non-contiguous) subword of $v$; we say $u$ is a \emph{scattered subword}\/ of $v$. Denote by $\shuffle$ the shuffle operation, defined on both words and languages.
A language $L \subseteq \Sigma^*$ is a \emph{shuffle ideal} if $L \shuffle \Sigma^* = L$. Every shuffle ideal is finitely generated: $L$ is a shuffle ideal if and only if $L = U \shuffle \Sigma^*$ for some finite set $U$. 
The \emph{principal shuffle ideal} generated by $u \in \Sigma^+$ is the language
\[
SI(u) = \{x \in \Sigma^* \mid u \sqsubseteq x\} = \Sigma^* u(1) \Sigma^* u(2) \Sigma^* \dots \Sigma^* u(n) \Sigma^*,
\]
where $u = u(1) u(2) \dots u(n)$ and $n = |u|$. The value $n$ is called its order.

By Higman's Lemma \cite{Higman52}, every shuffle ideal is a finite union of principal shuffle ideals. The \emph{shuffle ideal} generated by a finite set $U \subseteq \Sigma^+$ is 
$SI(U) = \bigcup\nolimits_{u \in U} SI(u)$.  

The words in $U$ are called \emph{generators}. The order of $U$ and of $SI(U)$ is the length of the longest word of $U$. 
We assume that $U$ is \emph{reduced}: there are no distinct words $u,v$ in $U$ such $u\sqsubseteq v$ ($v$ is redundant, since it has $u$ as a scattered subword).
The family of all shuffle ideals, denoted by $\mathcal{SI}$, is closed under union and intersection but not under complement: it forms a positive variety, generated by principal shuffle ideals.

The complexity of algorithms for learning shuffle ideals (or their complements) is assessed in terms of the size of the underlying target concept. For a reduced generator set $U$, we measure the size of the target concept $SI(U)$ by $
\|U\|=\sum_{u\in U} |u|.$ 
All complexity bounds below are understood with respect to this parameter.


\subsection{Query Learning -- Definitions and Preliminary Results}

Given a class $\mathcal{L}$ of languages over an alphabet $\Sigma$, the goal of a learner is to identify an unknown language $L\in\mathcal{L}$ 
by asking queries about $L$. A query learner is a deterministic procedure interacting with an oracle in sequential rounds, each time asking a query of a specified type, until the learner makes a conjecture $L'$, which ends the learning process. 

The first type of query we consider is a \emph{membership query}~\cite{DBLP:journals/ml/Angluin88}, which the learner formulates as a word $w\in\Sigma^*$. The oracle's answer is $1$ if $w\in L$ for the target language $L$, and $0$ otherwise; we denote this Boolean indicator by $L(w)$.
The language $L$ is successfully identified if $L=L'$ for the learner's conjecture $L'$. 

We will study the limitations of membership queries when learning classes of shuffle ideals. In this context, we  contrast membership queries with a stronger form of query---so-called \emph{contrastive queries}. 
In the contrastive query setting,  the learner and the oracle agree on a distance function $d:\Sigma^*\times\Sigma^*\rightarrow\mathbb{R}^{\ge 0}$ (not necessarily a metric), such that, for any $w\in\Sigma^*$, the set $\{d(w,w')\mid w'\in\Sigma^*\}$ is discrete. In any one round of interaction, the learner asks a query represented by a word $w\in\Sigma^*$. 
The oracle's answer belongs to $\{0,1\}\times(\Sigma^*\cup\{\bot\})$. If $w\in L$, the oracle responds with $(1,w')$, where $w'\notin L$. If $w\notin L$, the oracle responds with $(0,w')$, where $w'\in L$. In either case, the word $w'$ (called \emph{contrastive example}\/) is chosen such that $w'\in\arg\min_{L(w'')\ne L(w)}d(w,w'')$. (In case there is no $w''\in\Sigma^*$ with $L(w'')\ne L(w)$, the oracle sets $w'=\perp$.)\footnote{Contrastive queries are a generalization of \emph{correction queries}\/ \cite{DBLP:conf/icgi/Becerra-BonacheDT06,DBLP:conf/icgi/Tirnauca08}. The response to a correction query $w$ includes a word $w'$ iff $w\notin L$ (and thus $w'\in L$). That is, a word of opposite label is provided only if the learner queries a word outside the target language. Contrastive queries, however, always provide the learner with an example of opposite label (if one exists).} 

Note that the answer $(\ell,w')$ to a contrastive query $w$ provides the learner with more information than just information on the membership of $w,w'$ in the target language $L$. Even if a language $L'$ satisfies $L'(w)=L(w)\ne L'(w')=L(w')$, the learner can still exclude $L'$ from its pool of candidates for $L$ in case $w'\notin\arg\min_{L'(w'')\ne L'(w)}d(w,w'')$.


A learner learns the class $\mathcal{L}$ with (polynomially many) queries if it successfully identifies every $L$ in $\mathcal{L}$ (with a number of queries polynomial in the size of the underlying representation of $L$ and the length of the longest response provided by the oracle during learning). 
Every class of languages that is learnable from (polynomially many) membership queries is also learnable from (polynomially many) contrastive queries, but the opposite is not true, as we will see in Example~\ref{example:memq-conq}.

It is well-known (and trivial) that a class $\mathcal{L}$ is learnable from membership queries iff $\overline{\mathcal{L}}$ is. The same is obviously true for contrastive queries (see the appendix).

\begin{proposition}\label{prop:complement}
    Let $\mathcal{L}$ be any class of languages over $\Sigma^*$ and $d$ a distance function. Then $\mathcal{L}$ is learnable from (polynomially many) contrastive queries w.r.t.\ $d$ iff $\overline{\mathcal{L}}$ is learnable from (polynomially many) contrastive queries w.r.t.\ $d$ (where the size of $\overline{SI(U)}$ is $\|U\|$). 
\end{proposition}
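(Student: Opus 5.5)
The idea is a direct simulation: a learner for $\mathcal{L}$ can be turned into a learner for $\overline{\mathcal{L}}$ by flipping the label bit of every oracle answer. The contrastive example itself is passed through unchanged. By symmetry (since $\overline{\overline{\mathcal{L}}}=\mathcal{L}$), only one direction needs proof.

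Suppose $A$ learns $\mathcal{L}$ from contrastive queries w.r.t.\ $d$, and let the target be $\overline{L}$ with $L\in\mathcal{L}$. The new learner $\overline{A}$ runs $A$. Whenever $A$ asks $w$, $\overline{A}$ asks $w$ of the $\overline{L}$-oracle and receives $(\ell,w')$. It passes $(1-\ell,w')$ to $A$. When $A$ conjectures $L'$, $\overline{A}$ conjectures $\overline{L'}$.

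The key step is to check that $(1-\ell,w')$ is a legal answer of an $L$-oracle to the query $w$. There are three points.

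\textbf{Label.} $\overline{L}(w)=1-L(w)$, so the first component is correct.

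\textbf{Opposite label of $w'$.} For every $x$, $\overline{L}(x)\ne\overline{L}(w)$ holds iff $L(x)\ne L(w)$. So the two sets of words with opposite label to $w$ coincide. In particular, $\perp$ is returned for $\overline{L}$ exactly when it would be returned for $L$.

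\textbf{Minimality.} The set over which $\arg\min d(w,\cdot)$ is taken is the same under $L$ and under $\overline{L}$, and $d$ is fixed. Hence $w'$ lies in $\arg\min_{L(w'')\ne L(w)}d(w,w'')$.

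Since $A$ is deterministic and must succeed for every admissible oracle behaviour, it then outputs $L'=L$. Consequently $\overline{A}$ outputs $\overline{L}$.

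For complexity, $\overline{A}$ asks exactly as many queries as $A$. The responses are the same words, so the longest response is unchanged. By the stated convention, the size of $\overline{SI(U)}$ is $\|U\|$, the same as that of $SI(U)$. More generally, the representation of $\overline{L}$ is taken to be that of $L$. Therefore the polynomial bound transfers verbatim.

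The only point needing care is the minimality clause. The requirement is that the oracle for $L$ may return any element of the $\arg\min$, and that $A$ succeeds against every such choice. The flipped answers therefore stay within the range of behaviours $A$ is guaranteed to handle.
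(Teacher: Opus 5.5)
Your proposal is correct and takes the same route as the paper. The paper's proof simulates $A$, forwards each query unchanged, inverts the returned label before passing it to $A$, and uses symmetry for the converse. Your explicit checks are details the paper leaves implicit: the opposite-label set, and hence the $\arg\min$ and the $\perp$ case, coincides for $L$ and $\overline{L}$; the final conjecture must be complemented; and the query count and response lengths are unchanged.
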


The following is an immediate consequence of Proposition~\ref{prop:complement}.

\begin{corollary}\label{cor:complement}
    Let $\mathcal{L}$ be any class of languages over $\Sigma^*$ and $d$ be a distance function. Suppose $\mathcal{L}$ is learnable from (polynomially many) contrastive queries w.r.t.\ $d$. Suppose further that there is an algorithm that, given access to an oracle for an unknown language $L\in\mathcal{L}\cup\overline{\mathcal{L}}$, asks (polynomially many) contrastive queries w.r.t.\ $d$ and then correctly determines whether $L\in\mathcal{L}$. Then $\mathcal{L}\cup\overline{\mathcal{L}}$ is learnable from (polynomially many) contrastive queries w.r.t.\ $d$.
\end{corollary}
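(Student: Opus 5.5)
The plan is to combine the given decision procedure with two learners: one for $\mathcal{L}$ and one for $\overline{\mathcal{L}}$. Proposition~\ref{prop:complement} supplies the second learner, since $\mathcal{L}$ is learnable from polynomially many contrastive queries w.r.t.\ $d$. Call $A$ the learner for $\mathcal{L}$, $\overline{A}$ the learner for $\overline{\mathcal{L}}$, and $D$ the algorithm that decides membership in $\mathcal{L}$.

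The combined learner proceeds in two phases. First it runs $D$, forwarding each contrastive query of $D$ to the oracle for the unknown target $L\in\mathcal{L}\cup\overline{\mathcal{L}}$. If $D$ reports $L\in\mathcal{L}$, the learner then runs $A$ against the same oracle and outputs $A$'s conjecture. Otherwise $L\notin\mathcal{L}$, so $L\in\overline{\mathcal{L}}$, and the learner runs $\overline{A}$ and outputs its conjecture. In both cases the oracle's answers are exactly those the chosen learner expects for a target in its own class, so its correctness guarantee applies and the final conjecture equals $L$.

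It remains to bound the number of queries. This is the sum of the queries made by $D$ and by $A$ or $\overline{A}$. Each term is polynomial in the size of $L$'s representation and in the length of the longest oracle response. The size convention in Proposition~\ref{prop:complement} (the size of $\overline{SI(U)}$ is $\|U\|$) makes the size parameter the same whether $L$ is viewed as a member of $\mathcal{L}$ or of $\overline{\mathcal{L}}$. The longest response over the entire run is at least the longest response seen in either phase. Hence the sum of two polynomials, each evaluated at parameters no larger than the global ones, is polynomial. The only point needing care is that the size measure on $\mathcal{L}\cup\overline{\mathcal{L}}$ is well defined when $L$ lies in both classes; the convention handles this, or one takes the smaller representation. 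The argument is otherwise routine, and the non-polynomial version is the same with the counting dropped.
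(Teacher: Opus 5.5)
Your proposal is correct and is essentially the paper's argument: the paper gives no written proof, only calling the corollary an immediate consequence of Proposition~\ref{prop:complement}. Your construction is exactly the intended one: run the decision procedure, then either the learner for $\mathcal{L}$ or its label-inverting simulation for $\overline{\mathcal{L}}$.
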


\begin{definition}
    Let $d:\Sigma^*\times\Sigma^*\rightarrow\mathbb{R}$ be any distance function over $\Sigma^*$ (not necessarily a metric).  We say that $d$ is length-compatible
    if, for any words $w_1$, $w_2$, $w_3$ in $\Sigma^*$, we have
    \[
    \bigl||w_3|-|w_1|\bigr|>\bigl||w_2|-|w_1|\bigr|\ge 0\Rightarrow d(w_1,w_3)>d(w_1,w_2)\,.
    \]
\end{definition}

\begin{example}\label{exmp:length-compatible}
\begin{enumerate}
    \item The function $d_0$ defined by $d_0(w,w')=\bigl||w|-|w'|\bigr|$ is length-compatible. 
    \item The edit distance, denoted by $d_e$, is not length-compatible; for instance $d_e(aa,a)=1<2=d_e(aa,bb)$, although $|aa|-|a|=1>0=|aa|-|bb|$.
\end{enumerate}
\end{example}

We are now ready to provide an example showing that contrastive queries are strictly more powerful than membership queries.

\begin{example}\label{example:memq-conq}
    Let $\mathcal{L}=\{\Sigma^{\ge n}\mid n\in\mathbb{N}\}\cup\{\emptyset\}$. Then $\mathcal{L}$ is not learnable from membership queries, but (i) learnable from polynomially many contrastive queries w.r.t.\ any distance function and (ii)~learnable from a single contrastive query w.r.t.\ any length-compatible distance function. 

    \end{example}
    \begin{proof}
    By way of contradiction, suppose a learner $A$ learns $\mathcal{L}$ from membership queries. For the target language $\emptyset$, every query of $A$ will be answered with 0, until eventually $A$ must halt and conjecture $\emptyset$. Let $n_0$ be the length of the longest word $A$ has queried up to that point. Then $A$ cannot learn any language $\Sigma^{\ge n}$ for $n> n_0$.

    A learner using contrastive queries will first query the word $\varepsilon$. If the answer is positive, the learner conjectures $\Sigma^*=\Sigma^{\ge 0}$. If the answer is $(0,\perp)$, the learner conjectures $\emptyset$. Otherwise, the answer is of the form $(0,w)$ where $w$ belongs to the target language, which equals $\Sigma^{\ge n^*}$ for some $n^*\le |w|$. The learner will now use logarithmically (in $|w|$) many queries of the form $a^n$ in order to identify $n^*$, by performing binary search on the range $\{0,\ldots,|w|\}$. This way, the learner succeeds regardless of the distance function.

    Under any length-compatible distance function, the word $w$ in the above reasoning must be of length $n^*$, so that the learner can immediately conjecture $\Sigma^{\ge |w|}$. \qed
\end{proof}

\begin{definition}
    Let $d_1,d_2:\Sigma^*\times\Sigma^*\rightarrow\mathbb{R}$ be two functions. We say that $d_1$ is a refinement of $d_2$ iff, 
        $\forall w_1,w_2,w_3\in\Sigma^*\quad d_2(w_1,w_3)>d_2(w_1,w_2)\Rightarrow d_1(w_1,w_3)>d_1(w_1,w_2)$.
\end{definition}

Proposition~\ref{prop:refinement} (proven in the appendix) and Corollary~\ref{cor:refinement} are similar in spirit to Remark 4/Corollary 9 in \cite{mansouri2026formal}.

\begin{proposition}\label{prop:refinement} Let $\mathcal{L}$ be a class of languages and $d_1,d_2:\Sigma^*\times\Sigma^*\rightarrow\mathbb{R}$ two functions such that $d_1$ is a refinement of $d_2$. If $\mathcal{L}$ can be learned from (polynomially many) contrastive queries  using the distance function $d_2$, then it can be learned from (polynomially many) contrastive queries using the distance function $d_1$.
\end{proposition}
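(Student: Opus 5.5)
The plan is a simulation argument. Let $A_2$ be a learner that learns $\mathcal{L}$ from contrastive queries w.r.t.\ $d_2$. We build a learner $A_1$ that runs $A_2$ step by step and has access to a contrastive oracle w.r.t.\ $d_1$. Whenever $A_2$ asks a query $w$, $A_1$ asks the same query $w$ and passes the answer $(\ell,w')$ back to $A_2$ unchanged. When $A_2$ conjectures a language, $A_1$ outputs the same conjecture. Since there is one query per query, the number of queries is preserved, and so is polynomiality. Queries are counted in $\|U\|$ and in the length of the longest oracle response, and the responses passed to $A_2$ are literally the same words. The only thing to prove is that every answer the $d_1$-oracle gives is a legal answer for a $d_2$-oracle on the same target $L$. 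Then the run of $A_2$ inside $A_1$ is a legitimate run of $A_2$ against some valid $d_2$-oracle, and $A_2$ identifies $L$ against every such oracle.

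\textbf{Key step.} Fix $w$. Let $S=\{w''\mid L(w'')\ne L(w)\}$. If $S=\emptyset$, both oracles answer $\perp$ and there is nothing to show. Otherwise, suppose $w'\in\arg\min_{w''\in S} d_1(w,w'')$; we must show $w'\in\arg\min_{w''\in S} d_2(w,w'')$. Suppose not. Then there is $w''\in S$ with $d_2(w,w'')<d_2(w,w')$, that is, $d_2(w,w')>d_2(w,w'')$. The refinement property, with $w_1=w$, $w_3=w'$ and $w_2=w''$, gives $d_1(w,w')>d_1(w,w'')$. This contradicts the $d_1$-minimality of $w'$. The label $\ell=L(w)$ is the same under both oracles, so the whole answer $(\ell,w')$ is valid for $d_2$.

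\textbf{Main subtlety.} The obstacle is conceptual rather than computational. One has to make precise that "$A_2$ learns $\mathcal{L}$ w.r.t.\ $d_2$" means $A_2$ succeeds against \emph{every} oracle whose answers respect the $d_2$-argmin condition, including adaptive ones. The $d_1$-oracle, seen through the inclusion $\arg\min_{d_1}\subseteq\arg\min_{d_2}$, is exactly one such oracle. A second point needs care if the minimum under $d_2$ might not be attained. The paper assumes each set $\{d(w,w')\}$ is discrete, so we should check that this holds for $d_2$ as well: the $d_2$-argmin is nonempty whenever $S\neq\emptyset$. Once this is in place, the inclusion of argmin sets finishes the proof.
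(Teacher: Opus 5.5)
Your proposal is correct and follows the paper's proof: the same learner is reused unchanged, because every $d_1$-legal answer lies in $\arg\min_{L(w'')\ne L(w)} d_2(w,w'')$, an inclusion the paper calls clear and which you verify explicitly via the refinement property. Your worry about whether the $d_2$-minimum is attained is unnecessary, since your key step itself shows that the $d_1$-minimizer $w'$ attains the $d_2$-minimum.
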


Since any length-compatible distance function refines the function $d_0$ from Example~\ref{exmp:length-compatible}, we obtain:

\begin{corollary}\label{cor:refinement} Let $\mathcal{L}$ be a class of languages. If $\mathcal{L}$ can be learned from (polynomially many) contrastive queries using the distance function $d_0$, then it can be learned from (polynomially many) contrastive queries using any  length-compatible distance function.
\end{corollary}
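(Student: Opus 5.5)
The statement is Corollary~\ref{cor:refinement}. The plan is to deduce it directly from Proposition~\ref{prop:refinement}, taking $d_2 = d_0$ and letting $d_1$ be an arbitrary length-compatible distance function $d$. The only thing to establish is that every length-compatible $d$ is a refinement of $d_0$. Once that holds, Proposition~\ref{prop:refinement} transfers (polynomial) learnability of $\mathcal{L}$ with contrastive queries from $d_0$ to $d$.

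To verify the refinement property, fix $w_1,w_2,w_3\in\Sigma^*$ and assume $d_0(w_1,w_3)>d_0(w_1,w_2)$. By the definition of $d_0$ this means
\[
\bigl||w_3|-|w_1|\bigr|>\bigl||w_2|-|w_1|\bigr|\,.
\]
The right-hand side is trivially $\ge 0$, so the premise of length-compatibility holds. Length-compatibility of $d$ then gives $d(w_1,w_3)>d(w_1,w_2)$, which is exactly the refinement condition with $d_1=d$ and $d_2=d_0$.

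There is a small side condition to check. The contrastive query setting requires that $\{d(w,w')\mid w'\in\Sigma^*\}$ be discrete for each $w$, so we should confirm that $d_0$ is an admissible distance function. It is, since $d_0$ takes values in $\mathbb{N}$. Discreteness of $d$ itself is part of the hypothesis that $d$ is a distance function and is not affected by the argument.

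There is no real obstacle: the argument reduces entirely to unfolding the two definitions and invoking Proposition~\ref{prop:refinement}. The only care needed is to match the direction of the refinement correctly, namely that the length-compatible function plays the role of the finer function $d_1$.
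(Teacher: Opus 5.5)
Your proposal is correct and follows the paper's route exactly. The paper likewise observes that every length-compatible $d$ refines $d_0$, which you verify by noting that the premise of length-compatibility is precisely $d_0(w_1,w_3)>d_0(w_1,w_2)$, and then applies Proposition~\ref{prop:refinement} with $d_1=d$ and $d_2=d_0$.
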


\section{Learning Shuffle Ideals from Membership Queries}

Our study ties the problem of learning shuffle ideals with membership queries to the problem of finding what we call \emph{$(\Sigma^n\setminus W)$-universal sets} for any $n\in\mathbb{N}$ and some finite sets $W\subseteq \Sigma^*$. 
Such $S$ has the property that every length-$n$ word is a scattered subword of some word in $S$---while no word from $W$ is a scattered subword of any word in $S$.

\begin{definition}
Let $n\in\mathbb{N}$ and $W,S\subseteq\Sigma^*$. We say $S$ covers $v\in\Sigma^*$ if $v\sqsubseteq s$ for some $s\in S$. Further, $S$ is a  $(\Sigma^n\setminus W)$-universal set if the words in $\Sigma^n$ covered by $S$ are exactly those that do not cover any word in $W$, i.e., if $\Sigma^n\setminus\bigcup_{w\in W}\{v\mid w\sqsubseteq v\}\subseteq\bigcup_{s\in S}\{v\mid v\sqsubseteq s\}$ and $\bigcup_{s\in S}\{v\mid v\sqsubseteq s\}\cap \bigcup_{w\in W}\{v\mid w\sqsubseteq v\} =\emptyset$.
\end{definition}

To the best of our knowledge, $(\Sigma^n\setminus W)$-universal sets have previously been studied only for the case that (i) these sets are required to be singletons and, simultaneously, (ii)~$W\subseteq\Sigma^n$. In other words, one has tried to construct a \emph{single word}\/ that contains all length-$n$ words as scattered subwords, except for a set $W$ of \emph{length-$n$ words}\/~\cite{DBLP:conf/dcfs/FleischmannHHMN22}. It was shown that such a word always exists (and is of length linear in $n$) when $|W|\in\{0,1\}$, but only exists for binary alphabets when $|W|=2$~\cite{DBLP:conf/dcfs/FleischmannHHMN22}. By contrast, our notion calls for a \emph{set $S$}\/ of words that jointly contain all length-$n$ words as scattered subwords, except those length-$n$ words that themselves \emph{contain a word from $W$ as a scattered subword}.

To show the usefulness of $(\Sigma^n\setminus W)$-universal sets for learning shuffle ideals, we will deploy a basic procedure to be reused in several contexts. For any shuffle ideal $L$, when given a word $w\in L$ and a membership oracle for $L$, this procedure identifies a generator of $L$ that is a scattered subword of $w$. We dub this procedure \emph{distilling $w$ into a generator}; see Algorithm~\ref{alg:distilling}. Now the following lemma is straightforward.

\begin{algorithm}[t]
\caption{Distilling a positive witness $w\in L$ into a generator of $L$.} \label{alg:distilling}
\begin{algorithmic}[1] 
    \Procedure{Distill}{$w$}
        \State Ask queries for each subword of $w$ resulting from the removal of a single letter until receiving the answer 1 or until all such queries have been asked.
        \State If a query, say for $w'\sqsubseteq w$, is answered 1, then set $w:=w'$ and repeat from step~2.
    \State If all these queries are answered 0, then output $w$.
    \EndProcedure
\end{algorithmic}
\end{algorithm}

\begin{lemma}\label{lem:distilling}
If $L$ is a shuffle ideal, $w\in L$, and Algorithm~\ref{alg:distilling} accesses a membership oracle for $L$, then Algorithm~\ref{alg:distilling} asks $O(|w|^2)$ membership queries and outputs a generator of $L$ that is a scattered subword of $w$.
\end{lemma}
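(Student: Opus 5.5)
The proof splits into three parts: an invariant that the current word stays in $L$, a termination and counting argument, and an argument that the output is a generator.

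\emph{Invariant.} At every point of the execution the current word $w$ lies in $L$ and is a scattered subword of the original input. Initially this holds by assumption. The word $w$ is replaced only by a $w'$ obtained by deleting a single letter, and only when the query for $w'$ is answered $1$. So $w'\in L$, and $w'\sqsubseteq w$, which gives $w'\sqsubseteq$ input by transitivity of $\sqsubseteq$.

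\emph{Termination and query count.} Each successful replacement shortens $w$ by exactly one letter, so there are at most $|w|$ rounds of step~2, where $|w|$ is the length of the original input. In a round where the current word has length $m$, at most $m$ single-letter deletions are queried. The total is therefore at most $\sum_{m=1}^{|w|} m = O(|w|^2)$ membership queries.

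If one wants to avoid querying equal subwords twice, deletions inside a run of equal letters give the same word and can be queried once. This only improves the bound and is not needed.

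\emph{Correctness of the output.} The output $w^*$ is in $L$ by the invariant, and every word obtained from $w^*$ by deleting one letter is not in $L$. Write $L=SI(U)$ with $U$ reduced. Since $w^*\in L$, some $u\in U$ satisfies $u\sqsubseteq w^*$. I claim $u=w^*$. Suppose $u\ne w^*$. Then $|u|<|w^*|$, and any embedding of $u$ into $w^*$ leaves at least one position of $w^*$ unused. Deleting that letter gives a word $w''$ with $u\sqsubseteq w''$. Then $w''\in SI(u)\subseteq L$, contradicting that all one-letter deletions were answered $0$. Hence $w^*=u\in U$ is a generator, and it is a scattered subword of the input by the invariant.

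The only step that needs care is this last one. The point is that the deletion test only inspects \emph{single} letters, yet reaching a minimal element requires ruling out all proper scattered subwords. This works because $L$ is upward closed under $\sqsubseteq$: if some proper subword of $w^*$ lay in $L$, then some one-letter deletion of $w^*$, which contains that subword, would also lie in $L$. Here one also uses the standing assumption that $U\subseteq\Sigma^+$ is reduced, so the minimal elements of $L$ with respect to $\sqsubseteq$ are exactly the elements of $U$.
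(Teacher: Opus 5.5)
Your proof is correct and matches what the paper intends; the paper gives no proof and calls the lemma straightforward. The argument is complete: the current word stays in $L$, there are at most $|w|$ rounds of at most $|w|$ single-letter deletions each, and upward closure under $\sqsubseteq$ forces a word in $L$ with no positive one-letter deletion to equal the generator it contains. Your last step does not need $U$ to be reduced, since it shows directly that the output belongs to every finite generating set $U$ of $L$.
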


Finally, we can prove that efficient construction of ``small'' $(\Sigma^n\setminus W)$-universal sets would imply efficient learning of shuffle ideals from membership queries. In the following lemma, $M$ denotes a class of potential sets $U$ of generators that one may wish to learn. Intuitively, $W$ is a strict subset of an unknown target set $U$, representing the set of generators that a learner has already found at some step during the learning process. The index $n$ denotes the length of candidates for a new generator $u\in U\setminus W$ that the learner aims to find in the next step. In essence, the lemma states that a learner can find $u$ efficiently with membership queries, if it has access to a procedure that lists, for every length $n$, a $(\Sigma^n\setminus W)$-universal set $S(n,W)$---under the condition that the sum of the lengths of words in this set $S(n,W)$ is polynomial. The proof is in the appendix.

\begin{lemma}\label{lem:k-nearlyUniversalMemQ}
Let $M$ be a set of finite subsets of $\Sigma^*$, where no $U\in M$ contains words $u\!\ne\! u'$ with $u\!\sqsubseteq \! u'$. 
Suppose there is an algorithm with the following input/output behavior.\\
Input: $(n,W)$ where 
$n\le\max_{u\in U} |u|$ and  $W\subset U$ for some unknown $U\in M$.\\
Output: a $(\Sigma^n\setminus W)$-universal set $S(n,W)\subseteq \Sigma^{\ge n}$ such that $\sum_{s\in S(n,W)}|s|$ is bounded from above by a polynomial in $(n,|W|)$. 

Then there is a learner that, for any target language $SI(U)$ with $U\in M$, and on input of any $W\subset U$, finds a generator $u\in U\setminus W$, using polynomially many membership queries in $\|U\|$.
\end{lemma}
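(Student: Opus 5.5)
The learner iterates over candidate lengths $n=1,2,\ldots$ and, for each $n$, calls the assumed algorithm on input $(n,W)$ to obtain $S(n,W)$. It then asks a membership query for every $s\in S(n,W)$. As soon as some $s$ is answered $1$, it runs \textsc{Distill}$(s)$ (Algorithm~\ref{alg:distilling}) and outputs the resulting generator. Two things must be shown: this output lies in $U\setminus W$, and such a positive $s$ is met no later than $n=n^*:=\min\{|u| : u\in U\setminus W\}$. Note that $n^*\le\max_{u\in U}|u|$, so the algorithm is only ever invoked on legal inputs. This is because for $n<n^*$ we have $n\le n^*\le\max_{u\in U}|u|$, and the loop stops at or before $n^*$.

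\textbf{Correctness of the output.} Suppose $s\in S(n,W)$ is answered $1$. By Lemma~\ref{lem:distilling}, \textsc{Distill}$(s)$ returns some $u\in U$ with $u\sqsubseteq s$. Suppose, for contradiction, that $u\in W$. Choose any $v\in\Sigma^n$ with $v\sqsubseteq s$; one exists because $S(n,W)\subseteq\Sigma^{\ge n}$. Then $v$ is covered by $S(n,W)$, so by the second condition of universality no $w\in W$ satisfies $w\sqsubseteq v$. This does not yet contradict $u\sqsubseteq s$, since $u$ need not sit inside $v$. Repair the argument by picking $v$ more carefully. If $|u|\le n$, extend the embedding of $u$ in $s$ to a length-$n$ scattered subword $v$ of $s$ with $u\sqsubseteq v\sqsubseteq s$; this contradicts the second condition. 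If $|u|>n$, then $u$ cannot be contained in a covered length-$n$ word, and this case must be handled separately. The plan is to exploit the order of processing. In this case I will run the loop only up to $n^*$ and argue that we may restrict to $n\ge \max_{w\in W}|w|$, or alternatively build the stronger requirement $s\notin SI(W)$ into the procedure. I expect this to be the main technical obstacle: showing that a positive $s$ does not distill into an old generator. The fallback is to discard any $s$ with $w\sqsubseteq s$ for some $w\in W$; this check needs no queries, since $W$ is known. After this filter, any positive $s$ must contain a generator outside $W$, and \textsc{Distill} returns some $u\sqsubseteq s$. Such a $u$ cannot be in $W$, since otherwise $s$ would have been filtered. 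Hence $u\in U\setminus W$.

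\textbf{Completeness.} Let $u^*\in U\setminus W$ have length $n^*$. Because $U$ is reduced, no $w\in W$ satisfies $w\sqsubseteq u^*$. By the first condition of $(\Sigma^{n^*}\setminus W)$-universality, $u^*\sqsubseteq s$ for some $s\in S(n^*,W)$, so $s\in SI(U)$ and is answered $1$. We must also check that $s$ survives the filter, i.e.\ that $s$ contains no $w\in W$. The second universality condition guarantees only that no covered length-$n^*$ word contains some $w\in W$. If $w\sqsubseteq s$ with $|w|\le n^*$, we can pad $w$ inside $s$ to a covered length-$n^*$ word containing $w$, which is a contradiction. If $|w|>n^*$, this gap again needs attention. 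I would try to handle it by taking any $s\in S(n^*,W)$ containing $u^*$ and distilling not $s$ itself but a minimal subword of $s$ containing $u^*$ that is answered positively while avoiding $W$. The simplest such choice is to query $u^*$-candidates, namely length-$n^*$ subwords of $s$. There are at most $\binom{|s|}{n^*}$ of these, which is too many, so instead I would distill letter by letter. This means deleting letters from $s$ while the current word stays in $SI(U)$ and avoids $SI(W)$; avoiding $SI(W)$ is checkable offline. This modified distillation ends at a word that is in $SI(U)\setminus SI(W)$ and is minimal, hence a generator outside $W$.

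\textbf{Query count.} Summed over $n\le n^*\le\|U\|$, the total number of membership queries is $\sum_n|S(n,W)|$, which is polynomial in $(\|U\|,|W|)$ and hence in $\|U\|$. The distillation adds $O(|s|^2)$ further queries, and $|s|$ is bounded by the same polynomial. Together these give the claimed polynomial bound.
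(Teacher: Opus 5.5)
Your procedure is exactly the paper's: loop over $n=1,2,\ldots$, query every word of $S(n,W)$, and distill the first positive one. Your query count also matches. The ``main technical obstacle'' you identify, however, comes from reading the second universality condition as if it were restricted to words of length $n$. In the paper it is $\bigcup_{s\in S}\{v\mid v\sqsubseteq s\}\cap SI(W)=\emptyset$, with no length restriction. The text also says outright that no word from $W$ is a scattered subword of any word in $S$. Taking $v=s$ gives $S(n,W)\cap SI(W)=\emptyset$, and both of your problem cases disappear:
\begin{itemize}
\item A positive $s\in S(n,W)$ must contain a generator of $U\setminus W$. \textsc{Distill}$(s)$ returns some $u\in U$ with $u\sqsubseteq s$, and $u\in W$ would force $s\in SI(W)$.
\item The word $s\in S(n^*,W)$ covering $u^*$ is positive, so the loop halts by $n^*$ and the algorithm is only called on legal inputs.
\end{itemize}
Your offline filter therefore never discards anything and can be dropped. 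Keep the explicit ``new generator'' argument in this simplified form, since the paper's proof leaves that point implicit.

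As written, though, your completeness argument is left open under your own reading, and the proposed ``modified distillation'' does not close it:
\begin{itemize}
\item The learner cannot tell which positive $s$ contains $u^*$.
\item If $s\in SI(W)$, the invariant ``the current word avoids $SI(W)$'' already fails at the start.
\item Deleting letters while only preserving membership in $SI(U)$ can destroy the embedding of $u^*$ and end at a generator in $W$.
\end{itemize}
Under the length-restricted reading your concern would be legitimate. For example, over $\Sigma=\{a,b\}$ with $U=\{a,bbb\}$ and $W=\{bbb\}$, the set $S(1,W)=\{abbb\}$ would qualify, and plain distillation could output $bbb$. That is not the paper's definition, so replace the whole $|w|>n^*$ discussion with the one-line appeal to the definition above.
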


If a learner is not just trying to learn a single new generator, but learning the entire set $U$ of generators from scratch, then Lemma~\ref{lem:k-nearlyUniversalMemQ} can be applied iteratively, starting from $W=\emptyset$, and adding a new generator to $W$ in each iteration. Note that, for termination, the learner would need to have knowledge of the cardinality of $U$, so that it can decide when to stop looking for more generators. Naturally, the success of this learning method crucially hinges on the existence (and effective construction) of ``small'' $(\Sigma^*\setminus W)$-universal sets, whenever $W$ is a strict subset of a potential target generator set $U$. We will see below that such ``small'' $(\Sigma^*\setminus W)$-universal sets exist for example when all potential generator sets $U$ have size 2, but not in general if they have size 3.

One consequence of Lemma~\ref{lem:k-nearlyUniversalMemQ} is that principal shuffle ideals can be learned efficiently using membership queries (see the appendix for a proof):

\begin{theorem}\label{thm:principalMemQ}
    The class of all principal shuffle ideals is learnable from polynomially many membership queries.
\end{theorem}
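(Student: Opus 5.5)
We apply Lemma~\ref{lem:k-nearlyUniversalMemQ} with $M=\{\{u\}\mid u\in\Sigma^+\}$, the class of singleton generator sets. Since the target has exactly one generator, the learner only ever calls the lemma with $W=\emptyset$, which is the only strict subset of $U=\{u\}$. Once the single generator $u$ has been found, the learner conjectures $SI(u)$ and stops, so knowing that $|U|=1$ settles termination. What remains is to supply, for every $n\le |u|$, a $(\Sigma^n\setminus\emptyset)$-universal set $S(n,\emptyset)\subseteq\Sigma^{\ge n}$ whose total length is polynomial in $n$.

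With $W=\emptyset$, the second condition in the definition (no word of $S$ covers a word of $W$) holds vacuously. The first condition requires that every word of $\Sigma^n$ be a scattered subword of some word of $S$. We take the singleton $S(n,\emptyset)=\{(a_1a_2\cdots a_k)^n\}$, where $\Sigma=\{a_1,\dots,a_k\}$. Any $v=v(1)\cdots v(n)\in\Sigma^n$ embeds into this word by choosing $v(i)$ from the $i$-th block $a_1\cdots a_k$. The word has length $kn$, which is linear in $n$ because $|\Sigma|$ is a fixed constant. The set can be computed trivially from $n$, and it lies in $\Sigma^{\ge n}$.

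Under these hypotheses, Lemma~\ref{lem:k-nearlyUniversalMemQ} gives a learner that finds the generator $u$ with polynomially many membership queries in $\|U\|=|u|$. For intuition on what happens inside the lemma: the learner tries $n=0,1,2,\dots$ and queries $(a_1\cdots a_k)^n$. The first $n$ answered positively is $n=|u|$. The positive word, of length $k|u|$, is then distilled into $u$ by Algorithm~\ref{alg:distilling} with $O(k^2|u|^2)$ queries (Lemma~\ref{lem:distilling}). The edge case $n=0$, where the query is $\varepsilon$, causes no trouble because generators lie in $\Sigma^+$.

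I do not expect a real obstacle, since all the work is done by Lemma~\ref{lem:k-nearlyUniversalMemQ}. The one point to check is that the lemma's precondition $n\le\max_{u\in U}|u|$ matches how the learner uses it: the learner does not know $|u|$ in advance and has to iterate over $n$. This iteration stops at $n=|u|$, so the total cost is $\sum_{n\le|u|}1+O(|u|^2)$ queries, which is polynomial.
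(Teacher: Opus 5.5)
Your proof is correct and matches the paper's own argument: it applies Lemma~\ref{lem:k-nearlyUniversalMemQ} with $W=\emptyset$ and the single universal word $(a_1\cdots a_k)^n$ of length $|\Sigma|\cdot n$. One small slip in your intuition paragraph: the first positive query need not occur at exactly $n=|u|$ (for example, $u=a_1a_2$ already embeds at $n=1$), only at some $n\le|u|$, and that weaker fact is all your polynomial bound actually uses.
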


Another application of Lemma~\ref{lem:k-nearlyUniversalMemQ} is that shuffle ideals generated by exactly two words are efficiently learnable from membership queries. To prove this, we first extend a construction from~\cite{DBLP:conf/dcfs/FleischmannHHMN22} to show that small $(\Sigma^n\setminus\{u\})$-universal sets always exist for $u\in\Sigma^*$ and $n\ge |u|$, i.e., the premise of Lemma~\ref{lem:k-nearlyUniversalMemQ} is fulfilled for $M:=\bigl\{U=\{u,u'\}\mid u\not\sqsubseteq u'\bigr\}$.

\begin{lemma}\label{lem:avoid-one-generator}
    Let $u\in\Sigma^d$ and let $n\ge d$.
    Then there exists a $(\Sigma^n\setminus\{u\})$-universal set
    $S(n,\{u\})=\{s\}\subseteq \Sigma^{\ge n}$ such that, for fixed alphabet $\Sigma$,
    $|s|=O(dn)$.
\end{lemma}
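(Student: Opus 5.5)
The plan is to generalize the construction of Fleischmann et al.\ for nearly universal words. We exploit the fact that whether $u\sqsubseteq x$ is decided by a greedy leftmost embedding of $u$ into $x$. Write $u=u(1)u(2)\cdots u(d)$ with $d\ge 1$. For each letter $a\in\Sigma$, fix a word $p_a$ that lists every letter of $\Sigma\setminus\{a\}$ exactly once, so $|p_a|=|\Sigma|-1\ge 1$. Let $B_i:=(p_{u(i)})^n$, which is $n$ copies of $p_{u(i)}$. We define
\[
s \;:=\; B_1\,u(1)\,B_2\,u(2)\,\cdots\,B_{d-1}\,u(d-1)\,B_d .
\]
Then $|s| = dn(|\Sigma|-1)+d-1 = O(dn)$ for fixed $\Sigma$. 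Also $|s|\ge n(|\Sigma|-1)\ge n$, so $s\in\Sigma^{\ge n}$.

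\emph{Step 1: $u\not\sqsubseteq s$.} Let $i$ range from $1$ to $d$. Consider the leftmost embedding of $u(1)\cdots u(i)$ into $s$. We show by induction on $i$ that $u(i)$ is matched no earlier than the separator following $B_i$. Block $B_i$ contains no occurrence of $u(i)$. The letter $u(i-1)$ is, by induction, matched at or after the separator preceding $B_i$. Hence $u(i)$ can only be matched at or after the separator $u(i)$ that follows $B_i$. For $i=d$ there is no such separator, and $B_d$ contains no $u(d)$. So $u(d)$ cannot be matched, and $u\not\sqsubseteq s$. Because $\sqsubseteq$ is transitive, no scattered subword of $s$ contains $u$. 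This gives the second condition of the definition.

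\emph{Step 2: coverage.} Let $v\in\Sigma^n$ with $u\not\sqsubseteq v$. Take the greedy factorization of $v$ with respect to $u$:
\[
v=v_1u(1)v_2u(2)\cdots v_ku(k)v_{k+1}.
\]
Here $u(i)$ is the leftmost occurrence of that letter after the previous match, so $u(i)$ does not occur in $v_i$. The last factor $v_{k+1}$ does not contain $u(k+1)$, and $k<d$ because $u\not\sqsubseteq v$. Each $v_i$ is a word over $\Sigma\setminus\{u(i)\}$ of length at most $n$. It therefore embeds into $B_i=(p_{u(i)})^n$ by matching its $j$-th letter inside the $j$-th copy of $p_{u(i)}$. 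We map $v_i$ into $B_i$ for $i\le k+1$, and map the letter $u(i)$ to the separator after $B_i$ for $i\le k$. This gives an embedding $v\sqsubseteq s$, since $k+1\le d$ guarantees that block $B_{k+1}$ exists. So $\{s\}$ covers every length-$n$ word that avoids $u$, and it is a $(\Sigma^n\setminus\{u\})$-universal set.

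The only delicate point is getting the greedy factorization exactly right: each $v_i$ must avoid precisely the letter $u(i)$, and $k\le d-1$. The index bookkeeping in Steps 1 and 2 must match the block structure of $s$. Everything else is direct. Note that the hypothesis $n\ge d$ is not actually needed for the construction; it only guarantees that $\Sigma^n\setminus\{u\}$-coverage is the meaningful case.
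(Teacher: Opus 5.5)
Your proposal is correct and is essentially the paper's proof. It uses the same word $s=B_{u_1}\,u_1\,B_{u_2}\cdots u_{d-1}\,B_{u_d}$ with $B_a=c_a^n$, the same reason $u$ is avoided (each $u_i$ is forced past block $B_i$, and the last block lacks $u_d$), and the same coverage argument via the leftmost-embedding factorization of $v$, whose factors embed copy-by-copy into the blocks; your inductive Step~1 just spells out the avoidance argument that the paper states in one line.
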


\begin{proof}
    Write $u=u_1\cdots u_d$.
    For each $a\in\Sigma$, let $c_a$ be a word containing each letter of
    $\Sigma\setminus\{a\}$ exactly once, and set $B_a(n)=c_a^n$.
    Define
    $s=B_{u_1}(n)\,u_1\,B_{u_2}(n)\,u_2\cdots u_{d-1}\,B_{u_d}(n)$,
    and let $S(n,\{u\})=\{s\}$.
    
    The word $s$ avoids $u$ as a scattered subword: after the explicit occurrence
    of the prefix $u_1\cdots u_{d-1}$, the remaining suffix is $B_{u_d}(n)$, which
    does not contain the letter $u_d$.
    
    Conversely, let $v\in\Sigma^n$ with $u\not\sqsubseteq v$, and let $j$ be the least
    index such that $u_1\cdots u_j\not\sqsubseteq v$.
    Using the leftmost embedding of $u_1\cdots u_{j-1}$ in $v$, we may write
    $v=x_0u_1x_1u_2\cdots x_{j-2}u_{j-1}y$,
    where each $x_i$ avoids the letter $u_{i+1}$, and the suffix $y$ avoids the
    letter $u_j$.
    Since all these factors have length at most $n$, we have
    $x_i\sqsubseteq B_{u_{i+1}}(n)$ for $0\le i\le j-2$, and
    $y\sqsubseteq B_{u_j}(n)$.
    Therefore $v\sqsubseteq s$.
    This shows that $\{s\}$ is indeed a $(\Sigma^n\setminus\{u\})$-universal set.
    
    Finally, for fixed alphabet $\Sigma$, we have $|s|=O(dn)$. \qed
\end{proof}

\begin{theorem}\label{thm:twoGenerators}
    Let $\mathcal{L}$ be the class of all shuffle ideals of the form $SI(U)$ where $|U|=2$. Then $\mathcal{L}$ is learnable
    from polynomially many membership queries.
\end{theorem}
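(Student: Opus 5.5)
The plan is to apply Lemma~\ref{lem:k-nearlyUniversalMemQ} twice, once with $W=\emptyset$ and once with $W=\{u\}$ for the first generator $u$ found. Because $|U|=2$ is known, the learner stops after the second generator and conjectures $SI(U)$. The class is $M=\bigl\{\{u,u'\}\mid u\not\sqsubseteq u',\,u'\not\sqsubseteq u\bigr\}$, so the premise of Lemma~\ref{lem:k-nearlyUniversalMemQ} has to be checked only for inputs $(n,W)$ with $W=\emptyset$ or $W$ a singleton $\{u\}\subset U$.

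\emph{Case $W=\emptyset$.} We need a $(\Sigma^n\setminus\emptyset)$-universal set of polynomial total length, i.e.\ a word containing every length-$n$ word as a scattered subword. The set $\{c^n\}$ works, where $c$ lists every letter of $\Sigma$ once. Its length is $|\Sigma|n=O(n)$ for fixed $\Sigma$, and it lies in $\Sigma^{\ge n}$. The second condition of universality (no word of $W$ is covered) is vacuous. Lemma~\ref{lem:k-nearlyUniversalMemQ} therefore yields a first generator $u\in U$ with polynomially many membership queries. Concretely, the learner queries $c^n$ for $n=1,2,\dots$ until the answer is $1$, which happens by $n=\min_{u\in U}|u|\le\|U\|$, and then applies Algorithm~\ref{alg:distilling}.

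\emph{Case $W=\{u\}$.} Lemma~\ref{lem:avoid-one-generator} supplies, for every $n\ge |u|$, a $(\Sigma^n\setminus\{u\})$-universal singleton $\{s\}$ with $|s|=O(|u|\,n)$. This is polynomial in $(n,|W|)$ once we note that $|u|\le\|U\|$. The complexity bound in Lemma~\ref{lem:k-nearlyUniversalMemQ} is stated in $\|U\|$ anyway, so it is harmless to let the universal-set procedure depend on $|u|$. The range $n<|u|$ must also be covered, since the second generator $u'$ may be shorter than $u$. For such $n$ no length-$n$ word contains $u$, so a $(\Sigma^n\setminus\{u\})$-universal set must cover all of $\Sigma^n$ while avoiding $u$. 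If $n<|u|$, we can take the set of all words in $\Sigma^n$, or better the word $B_{u_1}(n)$ when $|u|\ge 2$, as in the lemma's construction truncated after the first block. However, $|\Sigma^n|$ is exponential, so the right choice is the construction of Lemma~\ref{lem:avoid-one-generator} with the same $s$, which avoids $u$ and covers every $v\in\Sigma^n$ not containing $u$. Its proof never used $n\ge d$; the factors of $v$ simply have length at most $n$. So $\{s\}$ works for all $n$, and Lemma~\ref{lem:k-nearlyUniversalMemQ} yields $u'\in U\setminus\{u\}$.

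The main obstacle is confirming this extension to $n<|u|$ and that the lemma's length bound is polynomial in the parameters Lemma~\ref{lem:k-nearlyUniversalMemQ} requires. I would resolve it by rereading the proof of Lemma~\ref{lem:avoid-one-generator}: the argument takes the least $j$ with $u_1\cdots u_j\not\sqsubseteq v$, and such a $j$ exists whenever $u\not\sqsubseteq v$, regardless of $n$. Finally, the total query count is the sum of two polynomial bounds in $\|U\|$. Since $U$ is reduced, the two found generators are distinct elements of $U$, so $\{u,u'\}=U$ and the conjecture $SI(\{u,u'\})$ is correct.
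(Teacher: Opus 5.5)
Your proof is correct, but it handles a second generator shorter than the first one differently from the paper.

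\textbf{The paper's route.} It never applies Lemma~\ref{lem:k-nearlyUniversalMemQ} in the second phase. It uses Lemma~\ref{lem:avoid-one-generator} only as stated, for $n\ge |g|$, querying $s_{|g|},s_{|g|+1},\dots$. A shorter $h$ is handled by a padding argument: any $x\in\Sigma^{|g|}$ with $h\sqsubseteq x$ and $x\ne g$ avoids $g$ because it has the same length as $g$. So $x$, and with it $h$, is covered by $s_{|g|}$, and the search succeeds by $n=\max(|g|,|h|)$.

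\textbf{Your route.} You observe, correctly, that the proof of Lemma~\ref{lem:avoid-one-generator} never uses $n\ge d$. The least $j$ with $u_1\cdots u_j\not\sqsubseteq v$ exists whenever $u\not\sqsubseteq v$, every factor of $v$ has length at most $n$, and $|s|\ge n$ still holds. So the same word $s$ is $(\Sigma^n\setminus\{u\})$-universal for every $n\ge 1$, and the theorem becomes a direct instance of Lemma~\ref{lem:k-nearlyUniversalMemQ}.

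\textbf{What each buys.} Your argument actually substantiates the paper's remark that the premise of Lemma~\ref{lem:k-nearlyUniversalMemQ} holds for two-generator sets, which Lemma~\ref{lem:avoid-one-generator} as stated only covers for $n\ge |u|$. The price, which you acknowledge, is that for $n<|u|$ the bound $|s|=O(|u|\,n)$ is not polynomial in $(n,|W|)$ alone. You must therefore appeal to the $\|U\|$ bound in the lemma's proof rather than its literal premise. The paper's padding trick needs no extension of any lemma. Your search stops at some $n\le |h|$ rather than $\max(|g|,|h|)$, which makes no difference for polynomiality.

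One small aside: your passing suggestion of $B_{u_1}(n)$ for $n<|u|$ is wrong. That word contains no occurrence of the letter $u_1$, so it misses, for example, $u_1^n$. Since you discard it in favour of the full word $s$, nothing in your argument depends on it.
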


\begin{proof}
    Let the target language be $SI(\{g,h\})$, where $g$ is one generator first learned
    using Lemma~\ref{lem:k-nearlyUniversalMemQ} with $W=\emptyset$ (using the universal sets available for all $n$).
    For each $n=|g|,|g|+1,|g|+2,\dots$, Lemma~\ref{lem:avoid-one-generator}
    provides a single word $s_n\in\Sigma^{\ge n}\setminus SI(g)$ such that
    $\{s_n\}$ is $(\Sigma^n\setminus\{g\})$-universal and $|s_n|=O(|g|n)$.
    The desired learner queries in order each $s_n$ until the first positive answer is obtained, and then distills
    $s_n$ into a generator using Algorithm~\ref{alg:distilling}.
    
    A positive answer must occur for some $n\le \max(|g|,|h|)$.
    If $|h|\ge |g|$, then $h$ itself is a word avoiding $g$, so
    $s_{|h|}$ contains $h$ as a scattered subword and is therefore positive.
    If $|h|<|g|$, let $x\in\Sigma^{|g|}$ be any extension of $h$ with $x\neq g$.
    Then $h\sqsubseteq x$, so $x\in SI(\{g,h\})$, while $g\not\sqsubseteq x$ because
    $|x|=|g|$ and $x\neq g$.
    Hence $s_{|g|}$ contains $x$, and therefore also $h$, as a scattered subword.
    
    Finally, every queried word avoids $g$, so any positive answer must be due to $h$.
    Therefore the distilling procedure cannot return $g$; it must return the other
    generator $h$.
    
    Since the search over $n$ stops by $n=\max(|g|,|h|)$, and for each such $n$ the
    queried word has polynomial length, the total number of queries is
    polynomial in $|g|+|h|$. \qed
\end{proof}

We can extend this positive result to classes of shuffle ideals with \emph{at most}\/ two generators, as long as they have the same length (Theorem~\ref{thm:<=2generatorsMemQsameLength}). To this end, we first observe the following lemma, which, together with Theorem~\ref{thm:<=2generatorsMemQsameLength}, is proven in the appendix.

\begin{lemma}\label{lem:2ndGeneratorMemQ}
    Suppose the class $\mathcal{L}$ of languages to be learned is any subclass of $\mathcal{SI}$. Suppose further that a learner knows one longest generator $u_1\in\arg\max_{u\in U}|u|$ of the target language $SI(U)$. Then polynomially many membership queries suffice for the learner to (i) determine whether $U$ contains a generator $u_2\ne u_1$ with $|u_2|\le|u_1|$ and (ii) find one such $u_2$ if one exists.
\end{lemma}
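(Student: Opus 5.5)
Let $n=|u_1|$. The idea is to query every word of length $n$ that avoids $u_1$ at once, using the single word supplied by Lemma~\ref{lem:avoid-one-generator}. Let $s$ be the word with $\{s\}=S(n,\{u_1\})$, so $|s|=O(n^2)$. By construction $u_1\not\sqsubseteq s$, and every $v\in\Sigma^n$ with $u_1\not\sqsubseteq v$ satisfies $v\sqsubseteq s$. The learner asks the single membership query $s$. If the answer is $1$, it runs Algorithm~\ref{alg:distilling} on $s$ and outputs the resulting generator as $u_2$. If the answer is $0$, it reports that no such $u_2$ exists.

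For correctness, first suppose the answer is $1$. Lemma~\ref{lem:distilling} says distilling returns some $u\in U$ with $u\sqsubseteq s$. Since $u_1\not\sqsubseteq s$, we get $u\neq u_1$. We also need $|u|\le n$. This holds because $u_1$ is a longest generator, so every generator has length at most $|u_1|=n$. (If generators longer than $u_1$ were possible, this step would fail, so the hypothesis is used here.)

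Now suppose some $u_2\in U$ exists with $u_2\neq u_1$ and $|u_2|\le n$. We must show the query answers $1$. Since $U$ is reduced, $u_1\not\sqsubseteq u_2$. Extend $u_2$ to a word $x\in\Sigma^n$ with $u_2\sqsubseteq x$ and $x\neq u_1$. This is possible because $|\Sigma|\ge 2$: if $|u_2|=n$, take $x=u_2\neq u_1$; otherwise at least two padding choices give distinct words of length $n$, so one of them differs from $u_1$. Because $|x|=|u_1|$ and $x\neq u_1$, we have $u_1\not\sqsubseteq x$. Therefore $x\sqsubseteq s$, hence $u_2\sqsubseteq s$, and $s\in SI(U)$. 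This is the same argument as in the proof of Theorem~\ref{thm:twoGenerators}.

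For complexity, the learner makes one query for $s$ plus $O(|s|^2)=O(n^4)$ queries during distillation, by Lemma~\ref{lem:distilling}. This is polynomial in $|u_1|\le\|U\|$.

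The only step that needs care is the padding argument that produces $x\neq u_1$ of length exactly $n$. The rest follows directly from Lemmas~\ref{lem:avoid-one-generator} and~\ref{lem:distilling}.
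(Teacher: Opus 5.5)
Your proof is correct and takes the same approach as the paper. Like the paper, you query the single $(\Sigma^n\setminus\{u_1\})$-universal word from Lemma~\ref{lem:avoid-one-generator} with $d=n$, conclude that no second generator exists on answer $0$, and distill the word on answer $1$; you also spell out the padding argument and the bound $|u|\le n$, which the paper treats as clear.
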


\begin{theorem}\label{thm:<=2generatorsMemQsameLength}
    The class of all shuffle ideals of the form $SI(U)$ where $U\subseteq \Sigma^n$ for some $n$ and $1\le |U|\le 2$ is learnable from polynomially many membership queries, without the learner knowing $n$ in advance.
\end{theorem}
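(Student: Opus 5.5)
The learner proceeds in two phases: it first finds one generator without knowing $n$, and then uses Lemma~\ref{lem:2ndGeneratorMemQ} to decide whether a second generator exists and, if so, to find it.

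\textbf{Phase 1.} Let the target be $SI(U)$ with $U\subseteq\Sigma^n$ and $1\le|U|\le 2$. For $m=0,1,2,\dots$ the learner queries a word that contains every word of length $m$ as a scattered subword, namely the $(\Sigma^m\setminus\emptyset)$-universal word $(a_1a_2\cdots a_k)^m$, where $\Sigma=\{a_1,\dots,a_k\}$. Every generator has length exactly $n$, so the first positive answer occurs precisely at $m=n$. This phase therefore costs $n+1$ queries, each of length $O(n)$. The positive word has length $kn$, and the learner distills it with Algorithm~\ref{alg:distilling}. By Lemma~\ref{lem:distilling}, this costs $O(n^2)$ queries and returns a generator $u_1\in U$; at the same time $n=|u_1|$ becomes known. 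This is essentially the argument of Lemma~\ref{lem:k-nearlyUniversalMemQ} with $W=\emptyset$.

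\textbf{Phase 2.} All generators have length $n$, so $u_1\in\arg\max_{u\in U}|u|$. Lemma~\ref{lem:2ndGeneratorMemQ} therefore applies and decides, with polynomially many queries, whether there is a $u_2\in U\setminus\{u_1\}$ with $|u_2|\le|u_1|$. In our class, any second generator satisfies this condition automatically, so the lemma decides exactly whether $|U|=2$. If the answer is positive, the lemma returns $u_2$ and the learner conjectures $SI(\{u_1,u_2\})$; otherwise it conjectures $SI(\{u_1\})$. Because $|U|\le 2$, no third generator can be missed.

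\textbf{Main obstacle.} The delicate point is justifying Lemma~\ref{lem:2ndGeneratorMemQ}, or making sure Phase 2 is sound even if we avoid relying on it. For the restricted class, a direct argument would be the following. Take the single word $s$ from Lemma~\ref{lem:avoid-one-generator} with $d=n=|u_1|$. It avoids $u_1$ and covers every length-$n$ word other than $u_1$, has length $O(n^2)$, and is positive exactly when a second generator $u_2\ne u_1$ of length $n$ exists. If $s$ is positive, distilling it must yield $u_2$: every scattered subword of $s$ avoids $u_1$, so only $u_2$ can account for a positive answer. This needs one query plus $O(n^4)$ queries for distillation. The total is polynomial in $\|U\|\ge n$, and $n$ is never needed in advance.
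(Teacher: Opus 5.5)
Your proposal is correct and matches the paper's proof, which finds one generator exactly as in Theorem~\ref{thm:principalMemQ} and then invokes Lemma~\ref{lem:2ndGeneratorMemQ}; that lemma is itself proved by your direct argument via Lemma~\ref{lem:avoid-one-generator} with $d=n$. One slip: the first positive answer in Phase~1 need not occur at $m=n$, because $(a_1\cdots a_k)^m$ has length $km$ and can already contain a length-$n$ generator for some $m<n$ (e.g.\ $U=\{a_1a_2\}$ is detected at $m=1$), but this is harmless since $n$ is read off as $|u_1|$ after distillation and the search still stops by $m=n$.
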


Identical length of the two generators is crucial here, as the following result shows. 

\begin{theorem}\label{thm:<=2generatorsMemQdifferentLength}
    Let $\mathcal{L}$ be the class of all shuffle ideals of the form $SI(U)$ where $1\le |U|\le 2$. Then $\mathcal{L}$ is not learnable from membership queries.
\end{theorem}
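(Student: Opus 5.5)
We will argue by an adversary argument in the style of Example~\ref{example:memq-conq}: assume some learner $A$ learns $\mathcal{L}$ from membership queries, run it on a suitably chosen target, and then exhibit a second language in $\mathcal{L}$ that answers every query of that run identically but is different. Since $A$ is deterministic, it outputs the same conjecture on both targets and is wrong on one of them.

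The base target will be a principal shuffle ideal $L_0=SI(\{u\})$ whose complement contains arbitrarily long words that all look alike to a finite set of queries. Concretely, take $u=ab$ for two distinct letters $a,b\in\Sigma$. Run $A$ on $L_0$; it halts after finitely many queries and conjectures $L_0$. Let $Q$ be the finite set of queried words and $m=\max_{q\in Q}|q|$. Now pick $N>m$ and let $L_1=SI(\{ab, b^N\})$. Since $ab\not\sqsubseteq b^N$ and $b^N\not\sqsubseteq ab$, the set $\{ab,b^N\}$ is reduced and $L_1\in\mathcal{L}$. For every $q\in Q$ we have $|q|<N$, so $b^N\not\sqsubseteq q$, and hence $q\in L_1$ iff $ab\sqsubseteq q$ iff $q\in L_0$. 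Thus $A$ receives the same answers on $L_1$ and outputs $L_0\neq L_1$ (note $b^N\in L_1\setminus L_0$). This contradiction shows $\mathcal{L}$ is not learnable.

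The one point needing care, and the only real obstacle, is choosing the second generator so that the extended set stays reduced, so it is a legitimate member of $\mathcal{L}$, and yet is invisible to all short queries. Using a long generator $b^N$ that is not a superword of $ab$ handles both requirements at once. The length of the second generator grows with $N$, so the argument also shows why it fails once generators are forced to share a length, consistent with Theorem~\ref{thm:<=2generatorsMemQsameLength}. No polynomial bound enters: the argument rules out learnability with any finite number of membership queries, exactly as stated.
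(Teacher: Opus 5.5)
Your proof is correct and is essentially the paper's argument. The paper runs the learner on $SI(a)$ and uses $SI(\{a,b^{n+1}\})$, with $n$ the maximal query length, as an indistinguishable second target, just as you do with $SI(ab)$ and $SI(\{ab,b^N\})$; the only detail to make explicit is choosing $N\ge 2$ so that $b^N\not\sqsubseteq ab$, which you are free to do.
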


\begin{proof} \emph{Sketch; see appendix for details.}
    Suppose a learner $A$ learns every language in $\mathcal{L}$ from membership queries. Since no shuffle ideal contains the empty word, $A$ must ask a query for at least one non-empty word. Let $a\in\Sigma$ be any letter occurring in the first non-empty word queried by $A$. One can then show that $A$ fails to distinguish $SI(a)$ from languages $SI\bigl(\{a,b^{n}\}\bigr)$ for sufficiently large $n$. \qed 
\end{proof}

Lemma~\ref{lem:k-nearlyUniversalMemQ} identifies a sufficient condition for efficient exact learning from membership queries. We have seen above that this condition is met in special cases. However, the following negative result shows that it cannot hold for arbitrary forbidden sets $W$.

\begin{theorem}\label{thm:MQexpLower}
    Let $\mathcal{L}$ be the class of all shuffle ideals of the form $SI(U)$ where $|U|=3$. Then $\mathcal{L}$ is not learnable from polynomially many membership queries. 
\end{theorem}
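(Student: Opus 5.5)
The plan is an adversary argument. I will exhibit, for each $n$, a family of exponentially many targets in $\mathcal{L}$, all of size linear in $n$, such that every membership query eliminates at most one target.

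Work over two letters $a,b\in\Sigma$ and fix $n$. Let $g_1=a^{n+1}$, $g_2=b^{n+1}$. For every word $h\in\{a,b\}^{2n}$ with $|h|_a=|h|_b=n$, set $U_h=\{g_1,g_2,h\}$. Each $U_h$ is reduced:
\begin{itemize}
\item $h$ contains neither $a^{n+1}$ nor $b^{n+1}$, since it has only $n$ copies of each letter;
\item $g_1$ and $g_2$ do not contain $h$, since each is a power of a single letter.
\end{itemize}
So $SI(U_h)\in\mathcal{L}$ with $\|U_h\|=4n+2$, and there are $\binom{2n}{n}\ge 2^n$ such targets.

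Next I analyse a single query $w$.
\begin{itemize}
\item If $g_1\sqsubseteq w$ or $g_2\sqsubseteq w$, then $w\in SI(U_h)$ for every $h$. The answer is $1$ uniformly and carries no information.
\item Otherwise $|w|_a\le n$ and $|w|_b\le n$. Since $|w|_a\ge|h|_a=n$ and $|w|_b\ge|h|_b=n$ are necessary for $h\sqsubseteq w$, we need $|w|_a=|w|_b=n$, and $h$ must use every letter of $w$. Hence $h\sqsubseteq w$ forces $h$ to be the restriction of $w$ to $\{a,b\}$. So at most one candidate $h$ lies below $w$, and for all other $h$ the answer is $0$.
\end{itemize}

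The adversary answers each query $w$ with $1$ if $w$ contains $a^{n+1}$ or $b^{n+1}$, and with $0$ otherwise. By the case analysis, these answers are consistent with $SI(U_h)$ for every $h$ except at most one $h$ per query.

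Suppose a learner uses at most $p(\|U\|)$ queries for a polynomial $p$. Choose $n$ with $p(4n+2)+2\le\binom{2n}{n}$. The learner is deterministic, so against the adversary it asks a fixed sequence of at most $p(4n+2)$ queries and outputs a single conjecture. At least two candidates $h\ne h'$ are consistent with all the answers. Both $SI(U_h)$ and $SI(U_{h'})$ are then valid targets producing the identical interaction, but they are different languages, because $h\in SI(U_h)\setminus SI(U_{h'})$. Therefore the conjecture is wrong for one of them.

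Membership queries return no words, so the ``longest response'' parameter plays no role and the bound is purely in terms of $\|U\|$.

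The only step needing care is the counting claim that a word avoiding both $a^{n+1}$ and $b^{n+1}$ covers at most one balanced candidate. This follows from the letter-count argument above: any embedding of $h$ must use every letter of $w$.
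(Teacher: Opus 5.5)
Your proof is correct and follows essentially the same route as the paper. The paper uses the family $\{a^r,b^r,v\}$ with $v\in\{a,b\}^{2r-2}$ balanced, which is your construction with $r=n+1$, together with the same letter-count argument that a query avoiding $a^r$ and $b^r$ is positive for at most one candidate, and the same adversary counting; your explicit choice of $n$ with $p(4n+2)+2\le\binom{2n}{n}$ just makes the final step slightly more precise.
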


\begin{proof}
    Fix two distinct letters $a,b\in\Sigma$.
    For every $r\ge 2$, let
    $    W_r=\{a^r,b^r\},
    A_r=\{v\in\{a,b\}^{2r-2}\mid |v|_a=|v|_b=r-1\},
    $
    and consider the class
    $
    \mathcal C_r=\bigl\{\,SI(W_r\cup\{v\}) \mid v\in A_r\,\bigr\}.
    $
    
    Every target in $\mathcal C_r$ has exactly three generators:
    $a^r$, $b^r$, and $v$. Indeed, $v$ contains only $r-1$ occurrences of $a$
    and of $b$, so neither $a^r$ nor $b^r$ is a scattered subword of $v$; and
    $v$ contains both $a$ and $b$, so $v \not\sqsubseteq a^r$ and $v\not\sqsubseteq b^r$.
    The size of each target is
    $
    n=|a^r|+|b^r|+|v|=r+r+(2r-2)=4r-2=\Theta(r)
    $.
    Moreover, $|\mathcal C_r|=|A_r|=\binom{2r-2}{r-1},$
    which is exponential in $r$, and hence exponential 
    in $n$.
    
    Let $x$ be a membership query.
    If $x\in SI(W_r)$, the answer is 1 for every concept in
    $\mathcal C_r$.
    Suppose now that $x\notin SI(W_r)$. Then $x$ contains at most $r-1$
    occurrences of $a$ and at most $r-1$ occurrences of $b$.
    If $x\in SI(W_r\cup\{v\})$ for some $v\in A_r$, then 
    $v\sqsubseteq x$. Since $v$ contains exactly $r-1$ occurrences of $a$
    and exactly $r-1$ occurrences of $b$, also $x$ must contain
    exactly $r-1$ occurrences of each of these two letters. Hence $v$ is uniquely
    determined: it is precisely the word obtained from $x$ by deleting all letters
    outside $\{a,b\}$.
    Thus, for a query $x\notin SI(W_r)$, at most one concept in $\mathcal C_r$
    answers positively.
    
    Thus, in the worst case, after $q$ queries, 
    $|\mathcal C_r|-q$ concepts in $\mathcal C_r$ remain consistent with the
    answers: indeed, as long as at least two concepts remain,
    one may choose a target that answers $0$ to the next query, thereby
    eliminating at most one candidate. Hence, 
    at least $|\mathcal C_r|-1$ queries are needed in the worst case. This number  is exponential in $n$. \qed 
\end{proof}

Combining Lemma~\ref{lem:k-nearlyUniversalMemQ} with the proof of Theorem~\ref{thm:MQexpLower}, shows that small $(\Sigma^n\setminus W)$-universal sets, for $W\subseteq U\in M$, cannot be constructed in the sense of the premise of Lemma~\ref{lem:k-nearlyUniversalMemQ}, if $k=3$ and $M=\{\{a^r,b^r,v\}\mid r\ge 2, v\in\{a,b\}^{2r-2},|v|_a=|v|_b=r-1\}$.

By contrast with Theorem~\ref{thm:MQexpLower}, Angluin~\cite{DBLP:journals/iandc/Angluin87} showed that the class of all regular languages is learnable from polynomially many membership and \emph{equivalence queries}. Given a class $\mathcal{L}$ and a target language $L\in\mathcal{L}$, an equivalence query is a language $L'\in\mathcal{L}$. If $L'=L$, the query is answered 1, and the learning process ends. If $L'\ne L$, the answer is of the form $(0,w)$, where $w\in\Sigma^*$ is a \emph{counterexample}\/ in the symmetric difference of $L'$ and $L$. Although $\mathcal{SI}$ is a subclass of the class of all regular languages, one cannot immediately transfer Angluin's result to conclude that $\mathcal{SI}$ is learnable from polynomially many membership and equivalence queries. This is because the learning algorithm can only ask equivalence queries that refer to languages $L'$ contained in the underlying class $\mathcal{L}$. We therefore provide a proof for the following result (see appendix):

\begin{theorem}
The class $\mathcal{SI}$ is learnable from polynomially many membership and equivalence queries. In particular, there is a learner $A$ learning any language $SI(U)\in\mathcal{SI}$ with $O(m)$ equivalence queries and $O(n^2+mz^2)$ membership queries, where $m=|U|$ and $n=\min\{|u|\mid u\in U\}$ are not known to $A$ in advance, and $z$ is the length of the longest counterexample provided to $A$ in the learning process.
\end{theorem}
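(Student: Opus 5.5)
The learner maintains a set $W$ of generators of the target $SI(U)$ that it has already found, always with $W\subseteq U$. It alternates equivalence queries on $SI(W)$ with the distilling procedure of Lemma~\ref{lem:distilling}. Since $SI(W)\subseteq SI(U)$, every counterexample is a positive example that no word of $W$ embeds into, so distilling it yields a new generator.

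\textbf{Step 1 (first generator from membership queries alone).} Since the learner cannot ask $SI(\emptyset)$ as a hypothesis, it first finds one generator with membership queries. Fix an enumeration $a_1\cdots a_{|\Sigma|}$ of the alphabet and let $p_k=(a_1\cdots a_{|\Sigma|})^k$. The word $p_k$ contains every word of length $k$ as a scattered subword, so $\{p_k\}$ is a $(\Sigma^k\setminus\emptyset)$-universal set. The learner queries $p_1,p_2,\dots$ until the first positive answer.
\begin{itemize}
\item If some $u\in U$ has $|u|=n$, then $u\sqsubseteq p_n$, so $p_n\in SI(U)$.
\item For $k<n$, every scattered subword of $p_k$ that lies in $SI(U)$ would have to contain some generator, which has length at least $n>k$. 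More directly, any generator $u\sqsubseteq p_k$ satisfies $|u|\le |p_k|$, but by universality the first positive $k$ is at most $n$.
\end{itemize}
So the first positive $p_k$ has $k\le n$ and length at most $|\Sigma|n$. Distilling it (Lemma~\ref{lem:distilling}) costs $O(|\Sigma|^2n^2)=O(n^2)$ membership queries, since $\Sigma$ is fixed, plus the $\le n$ queries $p_1,\dots,p_k$.

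The distilled word $g$ lies in $SI(U)$ and none of its one-letter deletions do. Hence $g$ is a $\sqsubseteq$-minimal element of $SI(U)$, i.e., some $u\in U$ satisfies $u\sqsubseteq g$, and minimality forces $u=g$. So $g\in U$, and the learner sets $W=\{g\}$.

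\textbf{Step 2 (main loop).} While $W\neq U$, the learner asks the equivalence query $SI(W)$, which is a legal hypothesis in $\mathcal{SI}$ because $W\neq\emptyset$.
\begin{itemize}
\item If the answer is $1$, it stops.
\item Otherwise it receives a counterexample $w$ of length at most $z$. Because $W\subseteq U$ gives $SI(W)\subseteq SI(U)$, we must have $w\in SI(U)\setminus SI(W)$. The learner distills $w$ with $O(z^2)$ membership queries into a generator $u\in U$ with $u\sqsubseteq w$, by the same minimality argument as in Step 1.
\item Since $w\notin SI(W)$, no $v\in W$ satisfies $v\sqsubseteq w$. As $u\sqsubseteq w$, also $u\notin W$, so the learner adds $u$ to $W$.
\end{itemize}

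\textbf{Counting and termination.} Each unsuccessful equivalence query enlarges $W$ by a new element of $U$. Hence at most $m$ unsuccessful and one successful equivalence query are asked, giving $O(m)$ equivalence queries and $O(n^2+mz^2)$ membership queries in total. Neither $m$ nor $n$ is used by the algorithm. When the final query succeeds, the conjecture $SI(W)$ equals the target even if $W$ were not literally $U$.

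The step needing the most care is the claim that the distilled word is exactly a generator of the reduced set $U$ rather than merely some minimal positive word. This follows because every minimal element of $SI(U)$ under $\sqsubseteq$ contains, and hence equals, some $u\in U$. The only other subtlety is bootstrapping the first hypothesis without an empty hypothesis, which Step 1 handles.
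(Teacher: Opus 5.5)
Your proof is correct and follows the paper's argument essentially step for step: you find a first generator by querying $(\sigma_1\cdots\sigma_{|\Sigma|})^k$ for increasing $k$ and distilling the first positive word, and then you alternate equivalence queries on $SI(W)$ with distilling the necessarily positive counterexamples into new generators. The second bullet of your Step 1 is superfluous and should not be read as claiming $p_k\notin SI(U)$ for $k<n$, which can fail; the bound $k\le n$ already follows from your first bullet alone.
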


\section{Learning Shuffle Ideals from Contrastive Queries}

Principal shuffle ideals are simple enough to already be learnable from polynomially many membership queries. If the learner can ask contrastive queries, even a single query suffices, for a wide variety of distance functions (see the appendix for a proof):

\begin{proposition}\label{prop:principalConQ}
Let $d$ be either length-compatible or the edit distance. 
Then the  class of principal shuffle ideals over $\Sigma$ is learnable from a single contrastive query w.r.t.\ $d$.
\end{proposition}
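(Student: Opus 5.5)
The learner asks the single contrastive query $\varepsilon$. No shuffle ideal over $\Sigma^+$-generators contains $\varepsilon$, so the answer is $(0,w)$ with $w\in SI(u)$ for the unknown generator $u$. Since $SI(u)$ is nonempty, $w\neq\perp$. The plan is to show that $w=u$ in both cases of the proposition, so that the learner can simply conjecture $SI(w)$.

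\emph{Length-compatible $d$.} Every word of $SI(u)$ has length at least $|u|$, and $u$ itself has length exactly $|u|$. Suppose some $w''\in SI(u)$ had $|w''|>|u|$. Taking $w_1=\varepsilon$, $w_2=u$, $w_3=w''$ gives $\bigl||w_3|-|w_1|\bigr|>\bigl||w_2|-|w_1|\bigr|$, so length-compatibility yields $d(\varepsilon,w'')>d(\varepsilon,u)$ and $w''$ is not a minimizer. Hence every minimizer has length $|u|$. A word of length $|u|$ that contains $u$ as a scattered subword must equal $u$. So the unique minimizer is $w=u$.

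\emph{Edit distance.} We have $d_e(\varepsilon,x)=|x|$ for all $x$, because turning $\varepsilon$ into $x$ requires exactly $|x|$ insertions. The minimizers over $SI(u)$ are therefore the shortest words of $SI(u)$, and as above the only one is $u$.

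The only point needing care is that the argmin may a priori contain several words, so the oracle could choose any of them. The argument above shows the argmin is the singleton $\{u\}$ in both cases, so the learner succeeds whichever word the oracle returns. No substantial obstacle is expected: the whole proof rests on $\varepsilon\notin SI(u)$ and on $u$ being the unique shortest word of $SI(u)$.
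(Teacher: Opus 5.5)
Your proof is correct and takes the paper's approach: query $\varepsilon$ and observe that the contrastive example must be the unique shortest word of $SI(u)$, namely $u$ itself. The only difference is that you apply the length-compatibility condition directly with $w_1=\varepsilon$, whereas the paper argues for $d_0$ and then transfers the result to all length-compatible $d$ via Corollary~\ref{cor:refinement}; the two arguments are equivalent.
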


However, without any assumptions on the form of the generators, the class of all shuffle ideals with at most two generators is not learnable with contrastive queries, for the same broad class of natural distance functions as considered in Proposition~\ref{prop:principalConQ}:

\begin{theorem}\label{thm:<=2generatorsConQdifferentLength}
    Let $\mathcal{L}$ be the class of all shuffle ideals of the form $SI(U)$ where $1\le |U|\le 2$. Let $d$ be either the edit distance or any length-compatible distance function. Then $\mathcal{L}$ is not learnable from contrastive queries w.r.t.\ $d$.
\end{theorem}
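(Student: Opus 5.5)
The plan is an adversary (indistinguishability) argument modeled on the proof sketch of Theorem~\ref{thm:<=2generatorsMemQdifferentLength}, with the extra work of checking that the \emph{contrastive examples} are also consistent. Suppose a learner $A$ learns $\mathcal{L}$ from contrastive queries w.r.t.\ $d$. Fix two distinct letters $a,b\in\Sigma$ and run $A$ on the target $L_0=SI(a)$, with the oracle's choices among minimisers fixed as described below. After finitely many queries $w_1,\dots,w_q$, $A$ halts and conjectures $L_0$. Let $\ell=\max_i|w_i|$ and choose $n>2\ell+2$. I will show that every answer given for $L_0$ is also a legal answer for $L_1=SI(\{a,b^n\})\in\mathcal{L}$. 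Then $A$, being deterministic, behaves identically on $L_1$ and wrongly conjectures $L_0\neq L_1$.

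The comparison uses two structural facts. First, the negatives of $L_1$ form a subset of the negatives of $L_0$. Second, the extra positives $L_1\setminus L_0$ all contain $b^n$ and hence have length at least $n$. Because of these facts, the argument splits by the label of the query $w=w_i$.

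\emph{Case $w\in L_0$ (so $w$ contains $a$, and $w\in L_1$).} The $L_0$-answer is $(1,w')$ with $w'$ minimising $d(w,\cdot)$ over $\Sigma^*\setminus L_0$. The candidate set for $L_1$ is a subset of that for $L_0$, so its minimum distance is at least that of $L_0$. It therefore suffices to show that $w'\notin L_1$, i.e.\ that $w'$ does not contain $b^n$; then $w'$ remains a minimiser for $L_1$.
\begin{itemize}
\item For length-compatible $d$: $b^{|w|}$ is a negative of the same length as $w$, so every minimiser has length $|w|\le\ell<n$.
\item For the edit distance: the minimum is at most $|w|_a$, because deleting all occurrences of $a$ yields a negative. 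Hence $|w'|\le|w|+|w|_a\le 2\ell<n$.
\end{itemize}

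\emph{Case $w\notin L_0$ (so $w$ avoids $a$; since $|w|<n$, also $w\notin L_1$).} The $L_0$-answer is $(0,w')$ with $w'$ a closest word containing $a$. The candidate set for $L_1$ enlarges that for $L_0$ only by words of length $\ge n$, so it suffices to show that these extra words are strictly farther from $w$ than $w'$.
\begin{itemize}
\item For length-compatible $d$: some word of length $|w|$ contains $a$ ($a$ itself if $w=\varepsilon$, and for $w=\varepsilon$ the minimisers have length $1<n$). So $w'$ has length difference at most $1$ from $w$, whereas the new positives have length difference $\ge n-\ell>1$. Length-compatibility then makes the new positives strictly farther.
\item For the edit distance: $d_e(w,w')=1$ (insert an $a$), while any word of length $\ge n$ is at edit distance $\ge n-\ell>1$.
\end{itemize}
In both cases $w'$ is still in the $\arg\min$ for $L_1$.

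The main obstacle is the edit-distance case for positive queries. The oracle's freedom to choose a minimiser could, a priori, produce a contrastive example containing $b^n$, for instance by substituting $a$'s with $b$'s. Bounding the length of \emph{every} minimiser by $2\ell$ handles this, and it is the reason $n$ must exceed $2\ell$ rather than just $\ell$. A secondary point to state explicitly is that the answers on $L_0$ (including the chosen minimisers) are fixed first. Only then is $n$ chosen, so there is no circularity.
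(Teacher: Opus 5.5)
Your proof is correct and is essentially the paper's argument. Both use the same indistinguishability of $SI(\{a\})$ and $SI(\{a,b^n\})$ for $n$ beyond the longest query, verified separately for positive and negative queries. The paper fixes a specific oracle that returns same-length minimisers and takes $n=m+1$, whereas you show that \emph{every} legal answer for $SI(\{a\})$ stays legal for $SI(\{a,b^n\})$. Your $n>2\ell+2$ is larger than needed, since edit-distance minimisers never exceed $|w|$ in length, but it also covers the degenerate run where only $\varepsilon$ is queried; there the paper's $b^{m+1}=b$ could be strictly closer to $\varepsilon$ than $a$ under some length-compatible $d$.
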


\begin{proof}
    Suppose a learner $A$ learns every language in $\mathcal{L}$ from contrastive queries w.r.t.\ $d$. Let $a,b\in\Sigma$, $a\ne b$.
Consider $A$ interacting with an oracle defined as follows. 

\smallskip
   (1) If $A$ queries $\varepsilon$, then the oracle answers $(0,a)$. 
   
   (2) If $A$ queries a word $w$ containing the symbol $a$, then the oracle responds $(1,w')$ where $|w'|=|w|=:n$, $w'$ does not contain the symbol $a$, and $w'$ minimizes $d(w,w')$ among all words without the symbol $a$. (If $d$ is length-compatible, all minimizers of $d(w,w')$ not containing $a$ have length $n$; such minimizers exist. If $d$ is the edit distance, then changing all occurrences of $a$ in $w$ to $b$ yields the desired form of $w'$.)
    
    (3) If $A$ queries a word $w$ without the symbol $a$, then the oracle responds $(0,w')$ where $|w'|=|w|=n$, $w'$ contains the symbol $a$, and $w'$ minimizes $d(w,w')$ among all words with the symbol $a$. (If $d$ is length-compatible, all minimizers of $d(w,w')$ containing $a$ have length $n$; such minimizers exist. If $d$ is the edit distance, then changing a single  occurrence of a symbol from $\Sigma\setminus\{a\}$ in $w$ to $a$ yields the desired form of $w'$.)

    \smallskip
Since this oracle answers $A$'s queries consistently with the potential target $SI(\{a\})$, the learner $A$ must eventually halt and conjecture $SI(\{a\})$. Now let $m$ be the length of the longest thus word queried by $A$ before halting. It follows that $A$'s interaction with the oracle for $SI(\{a\})$ is also fully consistent with the potential target language $SI(\{a,b^{m+1}\})$. Hence $A$ fails to identify $SI(\{a,b^{m+1}\})$, which belongs to $\mathcal{L}$. \qed
\end{proof}

By contrast, Theorem~\ref{thm:poly-fixed-length} will  show that, when all generators of any single shuffle ideal are of the same length, efficient learning from contrastive queries is possible. 
It remains open whether polynomially many membership queries alone are sufficient. As in the proof of Lemma~\ref{lem:k-nearlyUniversalMemQ}, polynomially many membership queries will be sufficient in case a general method for constructing ``small'' $(\Sigma^n\setminus W)$-universal sets exists. Theorem~\ref{thm:poly-fixed-length} establishes a positive learnability result \emph{without}\/ relying on $(\Sigma^n\setminus W)$-universal sets. This requires the definition of a specific (yet natural) distance function.

\begin{definition}
    Fix an arbitrary total order on $\Sigma$, and for every $n$ consider the lexicographic order on $\Sigma^n$ induced by this order.
    For any word $w\in\Sigma^n$, let $\mathrm{idx}(w)$ be the zero-based position of $w$ in this ordering of $\Sigma^n$. Now the shortlex distance function $d_{\mathrm{slex}}$ over $\Sigma^*$ is defined as follows, for any $x,y\in\Sigma^*$: 
    \[
d_{\mathrm{slex}}(x,y) = \begin{cases}\bigl| |x| - |y| \bigr|\,,&\mbox{if }|x|\ne |y|\,,\\
\frac{\bigl| \mathrm{idx}(x) - \mathrm{idx}(y) \bigr|}{|\Sigma^{|x|}|}\,,&\mbox{if }|x|=|y|\,.
\end{cases}
\]
\end{definition}

An oracle providing contrastive examples w.r.t.\ $d_{\mathrm{slex}}$ strictly prioritizes minimizing the length difference, using the (lexicographic) index distance solely as a tie-breaker for words of equal length.

Given $B\subseteq \Sigma^n$, a \emph{lexicographic gap of $B$ in $\Sigma^n$} is a maximal lexicographic interval in $\Sigma^n\setminus B$.
In particular, if $|B|=t$, then $\Sigma^n\setminus B$ is a disjoint union of at most $t+1$ gaps.
For instance, if $B=\varnothing$, the only gap is $\Sigma^n$. If $B=\{abb\}$, the gaps are the words in $\Sigma^n$ that are strictly smaller than $abb$ and the words in $\Sigma^n$ that are strictly greater than $abb$.

\begin{theorem}\label{thm:poly-fixed-length}
    Let $\mathcal{L}$ be the class of shuffle ideals for which no two generators differ in length, i.e., 
    $\mathcal{L}=\{SI(U)\mid\exists n\ge 1\ U\subseteq \Sigma^n\}$.
    Then $\mathcal{L}$ is learnable using $O(|U|^2)$ contrastive queries w.r.t.\ $d_{\mathrm{slex}}$. This is witnessed by a learner with polynomial runtime.
\end{theorem}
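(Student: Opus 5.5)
The plan is to first recover the common length $n$ together with one generator using a single query. Then I would show that on $\Sigma^n$ a contrastive answer w.r.t.\ $d_{\mathrm{slex}}$ behaves like a nearest-neighbour query in the lexicographic order. Finally I would run a gap-splitting search over the lexicographic gaps of the set of generators found so far.

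\emph{Step 1 (the first query).} The key observation is that, since $U\subseteq\Sigma^n$, a word of length $n$ belongs to $SI(U)$ iff it equals one of the generators. A length-$n$ word $x$ can contain a length-$n$ generator $u$ as a scattered subword only if $x=u$. Moreover, no word shorter than $n$ is in $SI(U)$. The learner therefore first queries $\varepsilon$. The answer must be $(0,w)$, and $d_{\mathrm{slex}}$ strictly prioritises length difference, so $w$ is a shortest positive word. Hence $|w|=n$ and $w\in U$. Set $B:=\{w\}$.

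\emph{Step 2 (contrastive answers are nearest-neighbour answers).} From now on only words $x\in\Sigma^n$ are queried. Suppose $x\notin U$. Positive words of length $n$ exist, and $d_{\mathrm{slex}}$ ranks every length-$n$ word below every other length, so the answer is $(0,y)$ with $y\in U$ minimising $|\mathrm{idx}(x)-\mathrm{idx}(y)|$. The learner therefore learns two things: a generator $y$, and the fact that every word whose index lies strictly within distance $\delta:=|\mathrm{idx}(x)-\mathrm{idx}(y)|$ of $\mathrm{idx}(x)$ is not a generator. If instead $x\in U$, the first bit of the answer already reveals $x$ as a new generator, and the contrastive example is simply ignored.

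\emph{Step 3 (gap search).} The learner keeps $B\subseteq U$ and a list of lexicographic gaps of $B$ in $\Sigma^n$ not yet certified generator-free. There are at most $|B|+1$ such gaps. For an inner gap strictly between consecutive $g<g'$ in $B$, it queries the word $x$ whose index is the (lower) midpoint of $\mathrm{idx}(g)$ and $\mathrm{idx}(g')$. For the two outer gaps, it queries the extreme word of $\Sigma^n$, i.e.\ the lexicographically first or last word. There are three outcomes.
\begin{enumerate}
\item If $x\in U$ or the returned $y$ is not in $B$, a new generator has been found. It is added to $B$, which splits one gap into at most two.
\item If $y\in\{g,g'\}$, then $\delta\ge$ the distance from $x$ to the nearer endpoint. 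Because $x$ is a midpoint, the open ball of radius $\delta$ around $\mathrm{idx}(x)$ covers the whole gap, except possibly one word adjacent to $g'$ when the gap has even length and the tie is broken towards $g$. That single leftover word is resolved with one more query. The gap is then certified empty.
\item Outer gaps are handled in the same way with the extreme word. A returned $y$ equal to the known endpoint generator certifies the entire outer gap.
\end{enumerate}

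\emph{Counting and termination.} Each round either enlarges $B$, which can happen at most $|U|$ times, or retires a gap with $O(1)$ queries. Only $O(|U|)$ gaps are ever created, so the total number of queries is $O(|U|)$, well within $O(|U|^2)$. When no uncertified gap remains, $B=U$ and the learner outputs $SI(B)$. Runtime is polynomial because indices are base-$|\Sigma|$ numbers with $n$ digits, so computing midpoints, successors and words from indices is polynomial in $n$, and $n\le\|U\|$.

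\emph{Main obstacle.} I expect the delicate step to be the certification argument in outcome 2. This is where non-unique minimisers (ties at equal index distance) and the boundary words of a gap must be handled carefully, so that an answer pointing back to an old generator really does prove that the entire gap is empty. The plan is to choose the query point so that the only ambiguity left is a single word, and to settle that word with an extra query.
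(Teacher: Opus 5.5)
Your proposal is correct and is essentially the paper's midpoints-over-gaps proof. You recover $n$ and a first generator by querying $\varepsilon$, then probe the lexicographic gaps of the known set $B$; an answer pointing back to a known generator certifies the gap empty up to one boundary word, which one extra query settles. Your two deviations are both sound: you probe the outer gaps at the extreme words of $\Sigma^n$, and you keep certified gaps retired instead of rescanning all gaps after each new generator as the paper does. The second one even sharpens the paper's $O(|U|^2)$ query bound to $O(|U|)$.
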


\begin{proof}
    Assume a subset $B\subseteq U$ of generators is already known.
    As $U\subseteq \Sigma^n$, the positive words of length $n$ are exactly the generators in $U$. Hence the unknown generators $U\setminus B$ are in the complement $\Sigma^n\setminus B$, which is a disjoint union of $\le |B|+1$ lexicographic gaps.
    
    We show that there is a polynomial-time procedure using at most $2|B|+1$ contrastive queries that either outputs a word in $U\setminus B$ or correctly concludes $B=U$.
    We process each gap by its interval of lexicographic indices $[\ell,r]$.
Let
$s=\left\lceil(\ell+r)/2\right\rceil$,
and let $q$ be the word in $\Sigma^n$ with $\mathrm{idx}(q)=s$.
Let $L$ be the target language. Consider three cases:

    \textbf{Case 1: $q\in L$.}
    Since $|q|=n$, we have $q\in U$.
    As $\mathrm{idx}(q)=s\in[\ell,r]$ and the current gap contains no known generators, it follows that $q\in U\setminus B$.

    \textbf{Case 2: $q\notin L$ and the oracle returns a word $v\in L\setminus B$.}
    By definition of $d_{\mathrm{slex}}$, the oracle  returns a word of length $|q|$, i.e., $|v|=n$.   
    Hence $v\in U$, and thus $v\in U\setminus B$.

    \textbf{Case 3: $q\notin L$ and the oracle returns a known generator $v\in B$.}
    Let $t=\mathrm{idx}(v)$. Let $p=2s-t$ be the point at the same distance from $q$ as $v$, on the opposite side of $q$.
    
    If $t<\ell$, then $p>r$.
    Hence every index in $[\ell,r]$ is strictly closer to $s$ than $t$ is, so no unknown generator can lie in the gap.
    
    If $t>r$, then $p\le \ell$.
    Any unknown generator in $U\setminus B$ with index $x\in [\ell,r]$ satisfies
    $|x-s|\ge |t-s|=t-s$, else it would be strictly closer to $q$ than $v$.
    Since $x\le r<t$, we have $x\le s$, otherwise $|x-s|=x-s\le r-s<t-s$, 
    contradicting $|x-s|\ge t-s$. 
    Therefore $s-x=|x-s|\ge t-s$, so $x\le 2s-t=p$.
    Thus, if $p<\ell$, the gap is empty; if $p=\ell$, then only the single point $\ell$ can still contain an unknown generator.
    In the latter case, let $y$ be the unique word with $\mathrm{idx}(y)=\ell$.
    If $q=y$, then $y$ has already been queried negatively, so the gap is empty.
    Otherwise, we query $y$: if the query is positive, we have found a new generator; otherwise the gap is empty.

\smallskip
    
    In Cases (1) and (2) we output a new generator. In Case (3) we discard the gap (with at most an additional query). If all gaps are discarded, then every point of $\Sigma^n\setminus B$ has been shown not to belong to $U\setminus B$, so $U\setminus B=\varnothing$, and hence $U=B$.
    
   We recover the first generator $u\in U$ (and thus the value of $n$) by querying $\varepsilon$. 
   We now iteratively recover $U$, starting with $B=\{u\}$. At each step, we apply the procedure above to the current set $B$. If the procedure returns a word $w\in U\setminus B$, we update $B\gets B\cup\{w\}$ and repeat. If it concludes that $B=U$, we terminate and output $L=SI(B)$.
    
    Each successful iteration adds a new generator to $B$, so there are at most $|U|$ such iterations. In addition, one final iteration is needed for concluding $B=U$. Each iteration uses polynomial time and at most $2|B|+1\le 2|U|+1$ queries. In total, the running time is polynomial in $n$ and $|U|$, and the number of queries is $O(|U|^2)$. \qed
\end{proof}

The procedure in the proof of Theorem~\ref{thm:poly-fixed-length}, which we call the \emph{midpoints-over-gaps}\/ strategy, can also be applied to identify generators at various lengths. If a set $W$ of generators of length at most $n$ has been found, and one wants to find a generator at length $n'\ge n$, then all words in $\Sigma^{n'}$ that contain any word in $W$ as a scattered subword form the initial set $B$. The learning process then proceeds exactly as described in the proof of Theorem~\ref{thm:poly-fixed-length}. This leads to the following result, where the claimed query efficiency is due to the fact that the number of gaps to be searched 
is bounded by a polynomial; see the appendix for a proof.

\begin{theorem}\label{thm:fewgaps2}
    Let $k\in\mathbb{N}$ be a constant and let $\mathcal{L}_k$ be 
    the class of shuffle ideals $SI(U)$ for which $\max_{u\in U}|u| - \min_{u\in U}|u|\le k$. Then $\mathcal{L}_k$ is learnable with polynomially many contrastive queries w.r.t.\ $d_{\mathrm{slex}}$.
\end{theorem}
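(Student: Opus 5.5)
The plan is to reduce Theorem~\ref{thm:fewgaps2} to the midpoints-over-gaps procedure from the proof of Theorem~\ref{thm:poly-fixed-length}, applied length by length. The key observation is that when the length spread is at most $k$, the initial blocked set $B$ at every relevant length is itself of polynomial size, so the number of gaps is polynomial.

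\textbf{Getting started.} First query $\varepsilon$. Under $d_{\mathrm{slex}}$ the contrastive example $u$ is a positive word of minimum length $m$. Such a word is a generator, since any proper scattered subword is shorter and hence negative. This tells the learner $m=\min_{u\in U}|u|$, and all generators have lengths in $\{m,\dots,m+k\}$, with $k$ known. Set $W=\{u\}$.

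\textbf{Processing one length.} Process the lengths $n'=m,m+1,\dots,m+k$ in increasing order. At length $n'$, let $B=SI(W)\cap\Sigma^{n'}$ for the current $W$. The invariant is that all generators of length $<n'$ already belong to $W$. Under this invariant, any positive word $x\in\Sigma^{n'}\setminus B$ must contain some generator $g\notin W$ with $|g|\le n'$. Hence $|g|=n'$ and $x=g$. So the positive words of length $n'$ outside $B$ are exactly the generators of length $n'$ not yet found.

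This is precisely the situation exploited in the proof of Theorem~\ref{thm:poly-fixed-length}. The three cases go through verbatim:
\begin{itemize}
\item a positive midpoint query is a new generator;
\item a contrastive example outside $B$ has length $n'$ (because $d_{\mathrm{slex}}$ prioritizes length), so it is a new generator;
\item a contrastive example in $B$ lets us discard the gap, with at most one extra query.
\end{itemize}
For Case~3 we need that every word of $B$ is positive, which holds. Each newly found generator is added to $W$, and $B$ is recomputed. When all gaps at length $n'$ are discarded, every generator of length $n'$ is in $W$, so the invariant holds for $n'+1$. After length $m+k$ the learner outputs $SI(W)$.

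\textbf{Main step: bounding $|B|$.} Each $w\in W$ has $|w|\ge m\ge n'-k$. Every word of length $n'$ containing $w$ is obtained by inserting $j=n'-|w|\le k$ letters into $w$. So
\[
|SI(w)\cap\Sigma^{n'}|\le \binom{n'}{j}|\Sigma|^{j}\le (n'|\Sigma|)^k,
\]
and therefore $|B|\le |W|\,(n'|\Sigma|)^k$. This is polynomial in $\|U\|$ for constant $k$ and fixed $\Sigma$. Moreover $B$ can be enumerated explicitly in polynomial time, so its sorted index list and its at most $|B|+1$ gaps are computable.

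\textbf{Counting queries.} One run of the gap procedure uses at most $2|B|+1$ queries. There are at most $|U|$ successful runs plus at most $k+1$ concluding runs, one per length. The total is therefore polynomial in $\|U\|$.

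The only genuine obstacle is the bound on $|B|$ and hence on the number of gaps. Everything else is inherited from Theorem~\ref{thm:poly-fixed-length}.
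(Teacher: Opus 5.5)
Your proposal is correct and follows essentially the same route as the paper: you obtain the shortest generator by querying $\varepsilon$, sweep the lengths $m,\dots,m+k$ with the midpoints-over-gaps procedure, and bound the number of gaps by noting that each found generator (of length at least $n'-k$) is a scattered subword of only polynomially many words of length $n'$. Your explicit invariant and the count $\binom{n'}{j}|\Sigma|^{j}$ make precise what the paper's proof states more briefly.
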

This raises the question how complex a class of shuffle ideals can be learned efficiently with the midpoints-over-gaps strategy. Theorem~\ref{thm:<=2generatorsConQdifferentLength} already tells us that not every class of shuffle ideals can be learned at all this way or efficiently this way. The intrinsic reasons are twofold: (1) if there is no bound on the length of generators, the learner may not know at which length to stop searching for more generators with the midpoints-over-gaps strategy; (2) the number of gaps grows exponentially with the length of the generators (even for $|\Sigma|=2$), as Proposition~\ref{prop:binaryIntervalsComplement} (proven in the appendix) shows.

\begin{proposition}\label{prop:binaryIntervalsComplement}
    Let $\Sigma$ be a binary alphabet, let
    $u\in\Sigma^k$ contain 
    two distinct symbols, and let $m\ge k$. 
    Then, $\Sigma^m\cap SI(u)$ has exactly $\binom{m-1}{k-1}+1$ lexicographic gaps.
\end{proposition}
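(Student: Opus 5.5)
The plan is to count gaps by counting the transitions between $\Sigma^m\setminus SI(u)$ and $\Sigma^m\cap SI(u)$ along the lexicographic order. Write $\Sigma=\{a,b\}$ with $a<b$. Since $u$ contains both symbols, neither the first word $a^m$ nor the last word $b^m$ of $\Sigma^m$ lies in $SI(u)$. Hence the complement starts and ends with a gap. The number of gaps therefore equals $1$ plus the number of \emph{entry points}, i.e.\ words $y\in\Sigma^m\cap SI(u)$ whose lexicographic predecessor lies outside $SI(u)$. Equivalently, we count the maximal runs of $SI(u)$ in $\Sigma^m$. So it suffices to show that there are exactly $\binom{m-1}{k-1}$ entry points.

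First, I will describe the predecessor explicitly. Every $y\ne a^m$ can be written uniquely as $y=p\,b\,a^j$, and its predecessor is $p\,a\,b^j$. The condition for $y$ to be an entry point is then $u\sqsubseteq p b a^j$ and $u\not\sqsubseteq p a b^j$. I will analyse this via the greedy leftmost embedding of $u$ into $y$. Since $y$ and its predecessor share the prefix $p$, the greedy embedding of $u$ into $p$ consumes a prefix $u_1\cdots u_i$ of $u$. The remaining suffix $u_{i+1}\cdots u_k$ must embed into $b a^j$ but not into $a b^j$. Suffixes of the form $a^r$ ($r\le j$) or $b\,a^{r}$ ($r\le j$) embed into $b a^j$. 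If the suffix is $b a^r$ with $r\ge1$, it cannot embed into $a b^j$, so the condition becomes a clean combinatorial one. If the suffix is $a^r$, it embeds into $ab^j$ when $r\le 1$, which forces the degenerate cases to be sorted out.

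The expected outcome is that the entry points are exactly the words of the form
$y = p\,b\,a^{j}$, where $p$ contains $u_1\cdots u_i$ greedily with $u_{i+1}=b$, $u_{i+2}\cdots u_k=a^{k-i-1}$, and $j\ge k-i-1$, subject to a minimality condition on $p$. I will then set up a bijection between entry points and the $(k-1)$-subsets of $\{1,\dots,m-1\}$, namely the positions of $u_1,\dots,u_{k-1}$ in the greedy embedding, with the last letter's position pinned down. Equivalently, the bijection is with compositions of $m$ into $k$ positive parts, where the filler between consecutive embedded letters is forced. Intuitively, in a run-start word the filler preceding each embedded letter must be the lexicographically smallest filler avoiding that letter, so it is $b$'s before an $a$ and $a$'s before a $b$.

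The main obstacle is the middle step: proving that the greedy-embedding characterisation of entry points is exact. This means showing both that the forced-filler words are entry points and that every entry point has this form. The pitfalls here are the boundary cases where the tail of $u$ consists of $a$'s, and the interaction of the greedy embedding with the change $a\mapsto b$ at the pivot position. I would first verify the count on small instances, such as $u=ab$ and $u=ba$ for $m=2,3$, to fix the precise normal form. Then I would prove the bijection by induction on $k$, stripping the first letter $u_1$ together with its forced filler block.
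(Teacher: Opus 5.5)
Your reduction is correct, and it is a different route from the paper's. Since $a^m,b^m\notin SI(u)$, the number of gaps is $1$ plus the number of entry points. The predecessor of $y=p\,b\,a^j$ is $p\,a\,b^j$. Let $i$ be the length of the prefix of $u$ greedily embedded in $p$, and $s=u_{i+1}\cdots u_k$; then $y$ is an entry point iff $s\sqsubseteq ba^j$ and $s\not\sqsubseteq ab^j$.

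The gap is the step you yourself call the main obstacle. It is not carried out, and the normal form you predict for it is false, in both directions:
\begin{itemize}
\item \emph{It misses entry points.} For $u=baa$ and $m=4$, the word $bbaa$ is an entry point: its predecessor $babb$ has only one $a$ after its $b$. But $bbaa=p\,b\,a^2$ with $p=b$, so $i=1$, $s=aa$ and $u_{i+1}=a\neq b$. No minimality condition on $p$ can restore it. The entry points here are $abaa,baaa,bbaa$, and your form captures at most two of them. The missing words are exactly the leftovers $s=a^r$ with $2\le r\le j$, which your own preliminary analysis shows qualify.
\item \emph{It admits non-entry points.} When $u$ ends in $b$, the leftover $s=b$ qualifies only for $j=0$, because $b\sqsubseteq ab^j$ for $j\ge1$. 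For $u=ab$, $m=3$, the word $aba$ fits your form (with $p=a$, $j=1$), yet its predecessor $aab$ lies in $SI(u)$.
\end{itemize}
The forced-filler heuristic does not single out run starts either. In the greedy embedding of $u$ into \emph{any} word, the filler before $u_i$ is automatically a power of the other letter. What characterizes run starts is that the part after the $(k-1)$-st embedded letter is the lexicographically least completion of length $n\ge1$: $a^n$ if $u_k=a$, and $a^{n-1}b$ if $u_k=b$. With this description, your bijection with position sets of $u_1,\dots,u_{k-1}$ in $\{1,\dots,m-1\}$ is true, but it still needs a proof.

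If you prove it by induction on $k$, stripping $u_1$ with its filler block, you must show that no run of $SI(u)\cap\Sigma^m$ crosses the boundary between consecutive blocks $\bar u_1^{\,e}u_1\Sigma^{m-e-1}$. Your plan does not supply this non-merging property, and it is precisely the key step of the paper's proof. The paper writes $T(cv,m)=c\,T(v,m-1)\cup\bar c\,T(cv,m-1)$ and notes that the boundary words $ab^{m-1}$ and $ba^{m-1}$ share no scattered subword of length $\ge2$. This gives Pascal's recurrence $I(k,m)=I(k-1,m-1)+I(k,m-1)$ at once.

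Alternatively, you can finish inside your local framework. Let $r_0$ be the number of trailing $a$'s of $u$. The entry points are exactly the words $y=p\,b\,a^j$ whose leftover satisfies one of:
\begin{itemize}
\item $s=b$ and $j=0$ (only possible if $r_0=0$);
\item $s=ba^{r_0}$ and $j\ge r_0$ (if $r_0\ge1$);
\item $s=a^r$ with $2\le r\le\min(r_0,j)$.
\end{itemize}
Exactly $\binom{\ell}{i}$ words $p\in\Sigma^{\ell}$ have greedy prefix length $i<k$. By the hockey-stick identity, this gives $\binom{m-1}{k-1}$ entry points when $r_0=0$. When $r_0\ge1$ it gives $\binom{m-r_0}{k-r_0}+\sum_{r=2}^{r_0}\binom{m-r}{k-r+1}$, which telescopes to $\binom{m-1}{k-1}$ by Pascal's rule.
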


For $k\approx\frac{m}{2}$, the value of $\binom{m-1}{k-1}+1$ grows exponentially in $m$, so that searching for generators at arbitrary lengths $m$ will cost the midpoints-over-gaps strategy an exponential number of queries. However, as Theorems~\ref{thm:poly-fixed-length} and~\ref{thm:fewgaps2} demonstrate, some subclasses of shuffle ideals impose conditions stringent enough to result in a polynomial number of lexicographic intervals to be checked by the midpoints-over-gaps strategy. 



Finally, we add a simple observation. 
Note that the complement of any shuffle ideal contains the empty string. Even if we consider the complement w.r.t.\ $\Sigma^+$, it is easy to distinguish shuffle ideals from complements of shuffle ideals by contrastive queries w.r.t.\ any distance function (see the appendix). 
We thus conclude, using Corollary~\ref{cor:complement}:

\begin{corollary}\label{cor:complementApplied}
    Let $\mathcal{L}$ be any of the classes 
    from Proposition~\ref{prop:principalConQ}, Theorem~\ref{thm:poly-fixed-length}, and Theorem~\ref{thm:fewgaps2}. Then $\mathcal{L}\cup\overline{\mathcal{L}}$ is learnable from polynomially many contrastive queries w.r.t.\ $d_{\mathrm{slex}}$.
\end{corollary}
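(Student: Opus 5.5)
The plan is to apply Corollary~\ref{cor:complement} directly. For each listed class $\mathcal{L}$, Proposition~\ref{prop:principalConQ}, Theorem~\ref{thm:poly-fixed-length} and Theorem~\ref{thm:fewgaps2} already give polynomial learnability w.r.t.\ $d_{\mathrm{slex}}$. For Proposition~\ref{prop:principalConQ} this needs one remark: $d_{\mathrm{slex}}$ is length-compatible. Indeed, if $\bigl||w_3|-|w_1|\bigr|>\bigl||w_2|-|w_1|\bigr|$, then $|w_3|\ne|w_1|$, so $d_{\mathrm{slex}}(w_1,w_3)\ge 1$. If $|w_2|\ne|w_1|$, the inequality holds by definition. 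If $|w_2|=|w_1|$, then $d_{\mathrm{slex}}(w_1,w_2)\le (|\Sigma^{n}|-1)/|\Sigma^n|<1$. So the first hypothesis of Corollary~\ref{cor:complement} holds in every case.

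It remains to give a polynomial procedure that decides, with contrastive queries w.r.t.\ $d_{\mathrm{slex}}$, whether the unknown $L\in\mathcal{L}\cup\overline{\mathcal{L}}$ is in $\mathcal{L}$. I would query $\varepsilon$ once.
\begin{itemize}
\item If $L\in\mathcal{L}$, then $L=SI(U)$ with $U\subseteq\Sigma^+$, so $\varepsilon\notin L$ and the label is $0$.
\item If $L=\overline{SI(U)}$, then $\varepsilon\in L$ and the label is $1$.
\end{itemize}
So the label of this single query decides membership in $\mathcal{L}$. The generators are nonempty, since $U\subseteq\Sigma^+$ by the definition of principal shuffle ideals, so the argument does not depend on the distance function, matching the remark before the corollary.

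The one point to check is the paper's aside about complements relative to $\Sigma^+$. If $\overline{\mathcal{L}}$ were taken within $\Sigma^+$, the label of $\varepsilon$ would no longer separate the two cases. In that variant, a shuffle ideal is upward closed, whereas a nonempty complement is downward closed. I would then use the contrastive answer: for $L\in\mathcal{L}$, a negative query returns a positive word, and a positive query returns a negative word. Under $d_{\mathrm{slex}}$ these point in opposite length directions, which a constant number of queries (e.g.\ $\varepsilon$ and a single letter) can detect.

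With the standard reading ($\overline{L}=\Sigma^*\setminus L$), however, the single query on $\varepsilon$ suffices. Corollary~\ref{cor:complement} then yields the claim, with the query count only additively increased by one.
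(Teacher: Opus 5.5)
Your argument is correct and is essentially the paper's: since $\varepsilon\notin SI(U)$ for every $U\subseteq\Sigma^+$ while $\varepsilon\in\overline{SI(U)}$, the label of the single query $\varepsilon$ decides whether $L\in\mathcal{L}$, and Corollary~\ref{cor:complement} finishes the proof. Your check that $d_{\mathrm{slex}}$ is length-compatible, which is needed to invoke Proposition~\ref{prop:principalConQ}, is a detail the paper leaves implicit. Your side remark on complements taken within $\Sigma^+$ is not needed for the statement, but its concrete suggestion fails: over $\{a,b\}$, $SI(a)$ and $a^+=\Sigma^+\setminus SI(b)$ return identical answers $(0,a)$, $(1,b)$, $(0,a)$ to the queries $\varepsilon$, $a$, $b$. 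This is why the paper's appendix also asks a length-two query $a_ia_j$ with $a_i\in L$ and $a_j\notin L$.
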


\section{Conclusions}

Shuffle ideals, despite their appealingly simple description, are difficult to learn from membership queries. We showed (not surprisingly) that they are efficiently learnable from a combination of membership and equivalence queries, but when using membership queries alone, shuffle ideals with at most two generators are not learnable, and those with exactly three generators are not efficiently learnable. However, we derived a sufficient condition for efficient learnability, based on our proposed notion of $(\Sigma^*\setminus W)$-universal sets---a notion possibly of independent interest. An open question is for which collections 
of generator sets 
one can effectively construct small $(\Sigma^*\setminus W)$-universal sets. 

Our midpoints-over-gaps strategy showcases the potential of contrastive queries; a second open problem is to find further classes of shuffle ideals for which this strategy is provably efficient.

Corollary~\ref{cor:complementApplied} extends our learnability results on contrastive queries to include complements of shuffle ideals. While this result is trivial and only a small step towards learning of piecewise testable languages, one direction of future work is to extend our learning strategies to special cases of Boolean combinations of shuffle ideals.

\begin{credits}
\subsubsection{\discintname}
The authors have no competing interests to declare.
\end{credits}
%
%
%
 \bibliographystyle{splncs04}
 \bibliography{biblio}

 \appendix

 \section{Proofs Omitted from the Main Body}

 \setcounter{proposition}{0}
\begin{proposition}
   Let $\mathcal{L}$ be any class of languages over $\Sigma^*$ and $d$ a distance function. Then $\mathcal{L}$ is learnable from (polynomially many) contrastive queries w.r.t.\ $d$ iff $\overline{\mathcal{L}}$ is learnable from (polynomially many) contrastive queries w.r.t.\ $d$ (where the size of $\overline{SI(U)}$ is $\|U\|$). 
\end{proposition}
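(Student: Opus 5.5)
The plan is a direct simulation argument. By symmetry it suffices to prove one direction, since $\overline{\overline{\mathcal{L}}}=\mathcal{L}$. So assume a learner $A$ learns $\mathcal{L}$ from (polynomially many) contrastive queries w.r.t.\ $d$. I will construct a learner $\overline{A}$ for $\overline{\mathcal{L}}$ that runs $A$ step by step and translates each oracle answer.

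\textbf{The translation.} Let the target be $\overline{L}$ with $L\in\mathcal{L}$. Whenever $A$ asks a query $w$, the learner $\overline{A}$ asks the same query $w$ to its oracle for $\overline{L}$ and receives $(\ell,w')$.
\begin{enumerate}
\item It passes $(1-\ell,w')$ to $A$.
\item When $A$ halts with conjecture $L'$, the learner $\overline{A}$ outputs $\overline{L'}$.
\end{enumerate}

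\textbf{Why the translated answers are valid.} The key check is that $(1-\ell,w')$ is a legal answer of a contrastive oracle for $L$ w.r.t.\ $d$.
\begin{itemize}
\item Since $\overline{L}(x)=1-L(x)$ for all $x$, the label $1-\ell$ equals $L(w)$.
\item The set of words with label opposite to $w$ is the same in both languages: $\{w''\mid \overline{L}(w'')\ne\overline{L}(w)\}=\{w''\mid L(w'')\ne L(w)\}$.
\item Hence $\arg\min_{L(w'')\ne L(w)}d(w,w'')$ is the same set for $L$ and for $\overline{L}$, so $w'$ is an admissible contrastive example for $L$.
\item The case $w'=\perp$ occurs for $L$ exactly when it occurs for $\overline{L}$.
\end{itemize}

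\textbf{Correctness and complexity.} Because each answer is legal, $A$ sees a valid interaction with some contrastive oracle for $L$. By assumption $A$ then conjectures $L'=L$, so $\overline{A}$ outputs $\overline{L}$.

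The query count is identical in the two runs. The longest oracle response is also the same word in both runs. The size measure of $\overline{L}$ is by convention the size of $L$ (for shuffle ideals, $\|U\|$). Therefore a polynomial bound for $A$ transfers verbatim to $\overline{A}$.

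\textbf{Main obstacle.} There is no real technical obstacle. The one point needing care is that $A$'s guarantee must hold against every legal oracle, that is, against any choice among the minimizers. The oracle for $\overline{L}$ may pick an arbitrary minimizer, so the argument uses this worst-case-over-oracles reading of learnability. I will state it explicitly, since it is implicit in the definition.
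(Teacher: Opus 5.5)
Your proposal is correct and matches the paper's proof: simulate $A$, forward each query unchanged, invert the returned label, and output the complement of $A$'s conjecture. Your explicit check that the set of opposite-label words, and hence the set of admissible contrastive examples (including the $\perp$ case), is the same for $L$ and $\overline{L}$ is exactly the justification the paper leaves implicit.
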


\begin{proof}
    It suffices to show one direction. Let $A$ be a successful learner for $\mathcal{L}$. A learner for $\overline{\mathcal{L}}$ only needs to simulate $A$, pass every query by $A$ to the oracle, and invert all the labels returned by the oracle before passing them on to $A$. \qed
\end{proof}

\begin{proposition} Let $\mathcal{L}$ be a class of languages and $d_1,d_2:\Sigma^*\times\Sigma^*\rightarrow\mathbb{R}$ two functions such that $d_1$ is a refinement of $d_2$. If $\mathcal{L}$ can be learned from (polynomially many) contrastive queries  using the distance function $d_2$, then it can be learned from (polynomially many) contrastive queries using the distance function $d_1$.
\end{proposition}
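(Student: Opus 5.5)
The plan is a simulation argument. Let $A_2$ be a learner for $\mathcal{L}$ that uses contrastive queries w.r.t.\ $d_2$. We build a learner $A_1$ that has access only to a $d_1$-oracle and runs $A_2$ step by step. Whenever $A_2$ asks a query $w$, $A_1$ asks the same query $w$ to its own oracle and receives an answer $(\ell,w')$. It then passes this answer on to $A_2$ unchanged. When $A_2$ halts, $A_1$ outputs the same conjecture. Query counts are identical, and the longest oracle response seen by $A_2$ is exactly the longest response $A_1$ sees. So a polynomial bound for $A_2$ transfers directly to $A_1$.

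The key step is to check that every answer produced by the $d_1$-oracle is also a legal answer of a $d_2$-oracle. Then $A_2$ is interacting with a valid $d_2$-oracle for the same target $L$, and its correctness guarantee applies.

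The label $\ell=L(w)$ does not depend on the distance function, and the case $w'=\perp$ occurs under exactly the same condition for both functions. Now suppose $w'\in\arg\min_{L(w'')\ne L(w)}d_1(w,w'')$. We want to show that $w'\in\arg\min_{L(w'')\ne L(w)}d_2(w,w'')$. Argue by contraposition. If $w'$ were not a $d_2$-minimizer, there would be some $w''$ with $L(w'')\ne L(w)$ and $d_2(w,w')>d_2(w,w'')$. The refinement property, applied with $w_1=w$, $w_2=w''$, $w_3=w'$, then gives $d_1(w,w')>d_1(w,w'')$. This contradicts the $d_1$-minimality of $w'$.

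There is one subtlety: what if no $d_2$-minimizer exists at all? This is ruled out by the standing assumption that $\{d_2(w,w'')\}$ is discrete; for this step I read that assumption as guaranteeing that the minimum is attained. Alternatively, the same contrapositive argument shows that $w'$ attains the infimum of $d_2(w,\cdot)$ over the opposite-label words, which suffices.

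Because $A_2$ learns every $L\in\mathcal{L}$ against every valid $d_2$-oracle, including adversarially chosen ones, it identifies $L$ in this simulated run. Hence $A_1$ learns $\mathcal{L}$ as well. The only real obstacle is the minimizer-transfer step above: the refinement direction must be used correctly, so that $d_1$-minimizers are shown to be $d_2$-minimizers and not the other way round. Everything else is bookkeeping.
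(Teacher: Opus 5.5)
Your proposal is correct and matches the paper's proof: run the $d_2$-learner unchanged against the $d_1$-oracle, using that $\arg\min d_1(w,\cdot)\subseteq\arg\min d_2(w,\cdot)$ over opposite-label words. The paper asserts this inclusion as ``clearly''; you justify it explicitly from the refinement property, in the correct direction.

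Your worry about whether a $d_2$-minimizer exists is unnecessary. Your contrapositive argument already shows $d_2(w,w')\le d_2(w,w'')$ for every opposite-label $w''$, so $w'$ itself attains the minimum, and no appeal to discreteness is needed.
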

\begin{proof}
An oracle that is truthful w.r.t.\ $d_1$, given a query $w$, must return an example $w''\in\arg\min_{L(w')\ne L(w)}d_1(w,w')$. Clearly, $\arg\min_{L(w')\ne L(w)}d_1(w,w')\subseteq \arg\min_{L(w')\ne L(w)}$ $d_2(w,w')$, so that $w''\in\arg\min_{L(w')\ne L(w)}d_2(w,w')$. Therefore, the same learner that learns $\mathcal{L}$ from (polynomially many) contrastive queries using the distance function $d_2$, also learns $\mathcal{L}$ from (polynomially many) contrastive queries using~$d_1$. \qed
\end{proof}

\setcounter{lemma}{1}
\begin{lemma}
Let $M$ be a set of finite subsets of $\Sigma^*$, where no $U\in M$ contains words $u\!\ne\! u'$ with $u\!\sqsubseteq \! u'$. 
Suppose there is an algorithm with the following input/output behavior.\\
Input: $(n,W)$ where 
$n\le\max_{u\in U} |u|$ and  $W\subset U$ for some unknown $U\in M$.\\
Output: a $(\Sigma^n\setminus W)$-universal set $S(n,W)\subseteq \Sigma^{\ge n}$ such that $\sum_{s\in S(n,W)}|s|$ is bounded from above by a polynomial in $(n,|W|)$. 

Then there is a learner that, for any target language $SI(U)$ with $U\in M$, and on input of any $W\subset U$, finds a generator $u\in U\setminus W$, using polynomially many membership queries in $\|U\|$.
\end{lemma}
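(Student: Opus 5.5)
The learner will iterate over lengths $n=1,2,3,\dots$. For each $n$ it computes the $(\Sigma^n\setminus W)$-universal set $S(n,W)$ with the assumed algorithm and asks a membership query for every $s\in S(n,W)$. As soon as some $s$ is answered $1$, it runs \textsc{Distill}$(s)$ (Algorithm~\ref{alg:distilling}) and outputs the resulting generator. Note that $W\subset U$ is a strict subset, so $U\setminus W\neq\emptyset$. Fix $u^*\in U\setminus W$ of minimal length, and set $n^*=|u^*|\le\max_{u\in U}|u|$. The input condition of the assumed algorithm therefore holds for every $n\le n^*$, and these are the only lengths at which the learner will invoke it.

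The first key step is to show that a positive answer occurs at length $n^*$ at the latest. By reducedness of $U\in M$, no $w\in W$ satisfies $w\sqsubseteq u^*$, since $w\ne u^*$ and both lie in $U$. Hence $u^*\in\Sigma^{n^*}$ covers no word of $W$. By the first inclusion in the definition of universality, $u^*\sqsubseteq s$ for some $s\in S(n^*,W)$, so $s\in SI(U)$ and that query is answered $1$.

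The second key step is to show that \textsc{Distill} then returns a generator outside $W$. Every queried $s$, at every length $n$, avoids $W$: if $w\sqsubseteq s$ for some $w\in W$, we need a contradiction with the second condition of the definition, which only speaks about words $v\in\Sigma^n$ with $v\sqsubseteq s$. This is the main obstacle. I would handle it as follows. Since $s\in\Sigma^{\ge n}$, if $|w|\le n$ we can extend the embedding of $w$ in $s$ to a length-$n$ scattered subword $v$ of $s$ with $w\sqsubseteq v$. That $v$ is covered by $S$ and covers $w$, contradicting the second condition. If instead $|w|>n$, the definition by itself gives no contradiction. In that case I would either argue that the universal-set condition is intended as the stronger statement ``no $s\in S$ contains a word of $W$'', or restrict the learner to lengths where this issue cannot arise. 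My first attempt would be the interpretive reading: the whole purpose of the construction, as in Lemma~\ref{lem:avoid-one-generator}, is that $s\notin SI(W)$.

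Granting this, each subword of a queried $s$ also avoids $W$. Lemma~\ref{lem:distilling} says \textsc{Distill} outputs a generator $g\in U$ with $g\sqsubseteq s$. Then $g\notin W$, since otherwise $g\sqsubseteq s$ would put a word of $W$ inside $s$. So $g\in U\setminus W$.

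Finally, for the complexity count, the learner runs through at most $n^*\le\|U\|$ lengths. At each length it makes $|S(n,W)|\le\sum_{s}|s|$ queries, which is polynomial in $(n,|W|)$ and hence in $\|U\|$, because $|W|\le|U|\le\|U\|$. The single call to \textsc{Distill} costs $O(|s|^2)$ queries by Lemma~\ref{lem:distilling}, again polynomial. The total number of membership queries is therefore polynomial in $\|U\|$.
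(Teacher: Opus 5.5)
Your proposal is correct and follows the paper's proof: query $S(n,W)$ for $n=1,2,\dots$ until the first positive answer, then distill. You are in fact more explicit than the paper, which never argues that a hit must occur by length $\min_{u\in U\setminus W}|u|$ (your reducedness argument) or that the distilled generator lies outside $W$; the caveat you raise for $w\in W$ with $|w|>n$ is a real looseness in the formal definition, but the paper's informal description just before it (``no word from $W$ is a scattered subword of any word in $S$'') is exactly your strong reading, so you can adopt it outright and drop the length-restriction fallback.
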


\begin{proof}
In order to find the new generator $u$, the learner first asks all words in a $(\Sigma^n\setminus W)$-universal set, for $n=1,2,\ldots$ until it receives the answer 1 for the first time, say for query $w$. Then
the learner distills $w$ into a generator $u$, using Algorithm~\ref{alg:distilling}.

The number of queries asked by the learner is polynomial, since, by the premise of the lemma, the size of the $(\Sigma^n\setminus W)$-universal sets, as well as the lengths of the words therein, are polynomial.
The number of membership queries asked before the first positive answer $u$ is at most
$\sum_{n=1}^{|u|} |S(n,W)| \le \sum_{n=1}^{|u|} \sum_{s\in S(n,W)} |s|$, which is bounded by a polynomial in $\|U\|$. \qed
\end{proof}

\setcounter{theorem}{0}
\begin{theorem}
    The class of all principal shuffle ideals is learnable from polynomially many membership queries.
\end{theorem}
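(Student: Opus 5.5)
The plan is to apply Lemma~\ref{lem:k-nearlyUniversalMemQ} with $M=\bigl\{\{u\}\mid u\in\Sigma^+\bigr\}$ and $W=\emptyset$. Since every $U\in M$ is a singleton, the only admissible input is $W=\emptyset$. Finding one generator $u\in U\setminus\emptyset$ therefore already identifies the target $SI(u)$. In this application the learner knows $|U|=1$ in advance, so it stops right after the first generator is distilled and conjectures $SI(u)$.

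It remains to verify the premise of the lemma. For every $n$ we need an explicit $(\Sigma^n\setminus\emptyset)$-universal set $S(n,\emptyset)\subseteq\Sigma^{\ge n}$ of polynomial total length. With $W=\emptyset$, the second condition (no covered word covers a word of $W$) holds vacuously. So we only need a set of words of length at least $n$ that jointly contain every word of $\Sigma^n$ as a scattered subword.

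I would take the single word $s_n=c^n$, where $c$ is a fixed enumeration of all letters of $\Sigma$, each exactly once. Any $v=v_1\cdots v_n\in\Sigma^n$ embeds into $s_n$ by choosing $v_i$ inside the $i$-th block $c$. Moreover $|s_n|=n|\Sigma|\ge n$, which is linear in $n$ for a fixed alphabet. This is essentially the construction of Lemma~\ref{lem:avoid-one-generator} with no forbidden letter. The set $S(n,\emptyset)=\{s_n\}$ is clearly computable.

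Tracing the learner makes the query count explicit. It queries $s_1,s_2,\dots$ in turn. The first positive answer occurs at $n=|u|$ at the latest, because $u\sqsubseteq s_{|u|}$; it may occur earlier, which is harmless. The learner then runs Algorithm~\ref{alg:distilling} on that word, which by Lemma~\ref{lem:distilling} costs $O((|u||\Sigma|)^2)$ queries and returns the generator $u$, the unique generator. In total this is $O(|u|)+O(|u|^2)$ queries, polynomial in $\|U\|=|u|$.

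There is no real obstacle here. The only point needing care is that Lemma~\ref{lem:k-nearlyUniversalMemQ} requires the queried words to have length at least $n$ and polynomial total length, and $s_n$ satisfies both.
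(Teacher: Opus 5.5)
Your proposal is correct and matches the paper's proof: you apply Lemma~\ref{lem:k-nearlyUniversalMemQ} with $W=\emptyset$ and take the single universal word $s_n=(\sigma_1\cdots\sigma_{|\Sigma|})^n$ of length $|\Sigma|\cdot n$. Your explicit query count and your remark that a positive answer may already occur for some $n<|u|$ are accurate and consistent with the paper's argument.
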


\begin{proof}
    By Lemma~\ref{lem:k-nearlyUniversalMemQ} (with $W=\emptyset$), it suffices to provide an algorithm that, given $n\in\mathbb{N}$, constructs a set $S(n)\subseteq \Sigma^{\ge n}$ of words such that $\sum_{s\in S(n)}|s|$ is bounded from above by a polynomial in $n$,  and $\bigcup_{s\in S(n)}(\Sigma^n\cap\{v\mid v\sqsubseteq s\})=\Sigma^n$. But this is easy to do; the set $S$ consists simply of the single string $s_n:=(\sigma_1\cdots\sigma_{|\Sigma|})^n$, where $\Sigma=\{\sigma_1,\ldots,\sigma_{|\Sigma|}\}$. Clearly, $|s_n|=|\Sigma|\cdot n\in O(n)$ and $s_n$ contains all words in $\Sigma^n$ as scattered subwords. \qed
\end{proof}

\setcounter{lemma}{3}
\begin{lemma}
    Suppose the class $\mathcal{L}$ of languages to be learned is any subclass of $\mathcal{SI}$. Suppose further that a learner knows one longest generator $u_1\in\arg\max_{u\in U}|u|$ of the target language $SI(U)$. Then polynomially many membership queries suffice for the learner to (i) determine whether $U$ contains a generator $u_2\ne u_1$ with $|u_2|\le|u_1|$ and (ii) find one such $u_2$ if one exists.
\end{lemma}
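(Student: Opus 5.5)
The plan is to reduce everything to a single membership query on the word from Lemma~\ref{lem:avoid-one-generator}, followed by Algorithm~\ref{alg:distilling}. Let $n=|u_1|$.

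\textbf{Step 1 (construct and query $s$).} Apply Lemma~\ref{lem:avoid-one-generator} with $u:=u_1$ and length $n$ itself, which is allowed since $n\ge|u_1|$. This gives one word $s$ of length $O(n^2)$ such that $\{s\}$ is $(\Sigma^n\setminus\{u_1\})$-universal. In particular $u_1\not\sqsubseteq s$, and every $x\in\Sigma^n$ with $u_1\not\sqsubseteq x$ satisfies $x\sqsubseteq s$. The learner asks the single membership query $s$.

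\textbf{Step 2 (correctness of the decision).} This gives part (i): $U$ has a generator $u_2\ne u_1$ with $|u_2|\le|u_1|$ if and only if $s\in SI(U)$.

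For the direction ``if $s\in SI(U)$'', some generator $u\in U$ satisfies $u\sqsubseteq s$. Since $u_1\not\sqsubseteq s$, we have $u\ne u_1$. Moreover $|u|\le|u_1|$ automatically, because $u_1$ is a longest generator.

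For the direction ``if such $u_2$ exists'', I construct a word $x\in\Sigma^n$ with $u_2\sqsubseteq x$ and $x\ne u_1$. If $|u_2|=n$, take $x=u_2$. If $|u_2|<n$, pick distinct letters $a,b$ and consider the two words $u_2a^{n-|u_2|}$ and $u_2b^{n-|u_2|}$. They are distinct, so at least one of them differs from $u_1$; take that one as $x$. Since $|x|=|u_1|$ and $x\ne u_1$, we get $u_1\not\sqsubseteq x$. Universality then gives $x\sqsubseteq s$, hence $u_2\sqsubseteq s$ and $s\in SI(U)$.

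The only slightly delicate point is guaranteeing an extension $x$ of $u_2$ of length exactly $n$ that avoids $u_1$. It is handled by the fact that equal length plus inequality rules out $u_1\sqsubseteq x$, together with $|\Sigma|\ge2$, which gives two distinct extensions to choose from.

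\textbf{Step 3 (finding $u_2$).} For part (ii), if the answer to $s$ is $1$, run Algorithm~\ref{alg:distilling} on $s$. By Lemma~\ref{lem:distilling}, it returns some generator $u\in U$ with $u\sqsubseteq s$, using $O(|s|^2)=O(n^4)$ queries. By the argument in Step 2, $u\ne u_1$ and $|u|\le|u_1|$, so $u$ is a valid choice of $u_2$.

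The total number of queries is $1+O(n^4)$, which is polynomial in $\|U\|$.
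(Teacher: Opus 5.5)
Your proof is correct and follows the same route as the paper. Like the paper, you query the single word from Lemma~\ref{lem:avoid-one-generator} with $d=n=|u_1|$ and, on a positive answer, distill it with Algorithm~\ref{alg:distilling}. You also spell out a step the paper only calls ``clear'': extending a shorter $u_2$ to a length-$n$ word different from $u_1$, as in the proof of Theorem~\ref{thm:twoGenerators}, which is what makes a negative answer conclusive.
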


\begin{proof}
Let $n=|u_1|$ be the length of the longest generator of the target language $L$. By Lemma~\ref{lem:avoid-one-generator} with $n=d$, there exists a $(\Sigma^n\setminus\{u_1\})$-universal set
    $\{w\}\subseteq \Sigma^{\ge n}$, where the length of $w$ is polynomial in $n$. (This can alternatively be concluded from a result in \cite{DBLP:conf/dcfs/FleischmannHHMN22}, by which one can construct a word $w$ of length \emph{linear} in $n$ such that $w$ contains all words of $\Sigma^n$ except $u_1$ as a scattered subword.) 

The desired learner now only needs to query $w$. If the answer is 0, clearly $L$ has no generator of length at most $n$, besides $u_1$. If the answer is 1, $L$ must have a second generator of length at most $n$. Such a generator can now be found by distilling $w$ into a generator, using Algorithm~\ref{alg:distilling}. \qed 
\end{proof}

\setcounter{theorem}{2}
\begin{theorem}
    The class of all shuffle ideals of the form $SI(U)$ where $U\subseteq \Sigma^n$ for some $n$ and $1\le |U|\le 2$ is learnable from polynomially many membership queries, without the learner knowing $n$ in advance.
\end{theorem}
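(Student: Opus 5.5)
The plan is to find one generator first, and then use Lemma~\ref{lem:2ndGeneratorMemQ} to decide whether a second one exists. Both generators have the same length, so the first generator found is automatically a longest one, which is exactly the premise that lemma needs.

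\emph{Finding a first generator.} I would apply Lemma~\ref{lem:k-nearlyUniversalMemQ} with $W=\emptyset$, using the universal words $s_m=(\sigma_1\cdots\sigma_{|\Sigma|})^m$ from the proof of Theorem~\ref{thm:principalMemQ}. The learner does not know $n$, so it queries $s_1,s_2,\ldots$ in turn until the first answer $1$. Every generator has length $n$, and $s_m$ contains all words of $\Sigma^m$ as scattered subwords. Hence $s_m\notin SI(U)$ for $m<n$, because $s_m$ contains only $|\Sigma| m$ letters in blocks and cannot contain a length-$n$ generator. I will justify this more carefully: a word $u$ of length $n$ need not embed in $(\sigma_1\cdots\sigma_{|\Sigma|})^m$ for $m<n$, but it might, since that word has length $|\Sigma|m\ge n$ when $m\ge n/|\Sigma|$. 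This does not matter. All we need is that the first positive answer occurs at some $m\le n$, which holds because $s_n$ covers every generator. The queried word $s_m$ then has length $O(|\Sigma| n)$. Distilling it with Algorithm~\ref{alg:distilling} costs $O(n^2)$ queries by Lemma~\ref{lem:distilling}, and returns a generator $u_1\in U$. This also reveals $n=|u_1|$. The number of queries before distilling is $\sum_{m\le n}1=O(n)$.

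\emph{Deciding whether a second generator exists.} Since $U\subseteq\Sigma^n$, $u_1$ is a longest generator. Lemma~\ref{lem:2ndGeneratorMemQ} therefore applies. It uses one query on the $(\Sigma^n\setminus\{u_1\})$-universal word $w$ of Lemma~\ref{lem:avoid-one-generator}, of length $O(n^2)$, followed by distillation if the answer is positive, at a cost of $O(n^4)$ queries. This determines whether some $u_2\neq u_1$ with $|u_2|\le n$ lies in $U$, and finds it if so.

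\emph{Correctness.} If $|U|=2$, the other generator has length $n$ and is found. If $|U|=1$, the answer is $0$. The learner therefore outputs $SI(\{u_1\})$ or $SI(\{u_1,u_2\})$, and in the latter case $|U|\le 2$ ensures nothing is missed. The total number of queries is polynomial in $n\le\|U\|$. The main point to check is the step where $n$ is unknown. Termination and correctness there rely only on the fact that some $s_m$ with $m\le n$ is positive, and that whatever is distilled is a genuine generator. I do not need the first positive $m$ to equal $n$, because every generator has length exactly $n$ regardless of which one is found.
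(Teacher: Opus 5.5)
Your proposal is correct and follows the paper's route. You obtain a first generator from the universal words $(\sigma_1\cdots\sigma_{|\Sigma|})^m$ of Theorem~\ref{thm:principalMemQ} plus distillation; since all generators have length $n$, it is automatically a longest one, so Lemma~\ref{lem:2ndGeneratorMemQ} settles whether a second generator exists and finds it. You can drop your initial claim that $s_m\notin SI(U)$ for $m<n$, which is false in general (e.g.\ $ab\sqsubseteq s_1$ over $\{a,b\}$) and, as you yourself note, not needed.
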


\begin{proof}
    The same learning method as deployed in the proof of Theorem~\ref{thm:principalMemQ} can be used to identify one generator of the target language. If a second generator exists, it has the same length as the first one; hence we can invoke Lemma~\ref{lem:2ndGeneratorMemQ} to complete the proof. \qed 
\end{proof}

\begin{theorem}
    Let $\mathcal{L}$ be the class of all shuffle ideals of the form $SI(U)$ where $1\le |U|\le 2$. Then $\mathcal{L}$ is not learnable from membership queries.
\end{theorem}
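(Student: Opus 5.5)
We argue by an adversary construction, as in the sketch in the main body: we fix a run of the learner on one target and then exhibit a second target in $\mathcal{L}$ that answers every query identically. Suppose a learner $A$ learns every $SI(U)$ with $1\le|U|\le 2$ from membership queries. Fix two distinct letters $a,b\in\Sigma$.

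\emph{Step 1: choose the base target.} The choice of $a$ must depend on $A$ in order to handle the empty-word issue. Since $\varepsilon$ belongs to no shuffle ideal (all generators lie in $\Sigma^+$), the answer to a query $\varepsilon$ is always $0$ and carries no information. Hence $A$'s sequence of queries is fixed up to its first non-empty query $w_1$. Let $a$ be any letter occurring in $w_1$, and let $b\in\Sigma\setminus\{a\}$.

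\emph{Step 2: run $A$ on $SI(\{a\})$.} The target $SI(\{a\})=SI(a)$ has a single generator, so it lies in $\mathcal{L}$. Hence $A$ halts after finitely many queries $w_1,\dots,w_k$ and conjectures $SI(a)$. Let $m$ be the maximum length of these queried words.

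\emph{Step 3: the confusing target.} Consider $L'=SI(\{a,b^{m+1}\})$. The set $\{a,b^{m+1}\}$ is reduced, since neither word is a scattered subword of the other, so $L'\in\mathcal{L}$ has exactly two generators and $L'\neq SI(a)$, because $b^{m+1}\in L'\setminus SI(a)$. For every queried word $w_i$ we have $|w_i|\le m<m+1$, so $b^{m+1}\not\sqsubseteq w_i$. Consequently $w_i\in L'$ iff $a\sqsubseteq w_i$ iff $w_i\in SI(a)$. By determinism, $A$ asks exactly the same queries on target $L'$, receives the same answers, and outputs $SI(a)\neq L'$, a contradiction.

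The only potentially delicate point is Step 1, and it is actually unnecessary: Step 3 works for any fixed $a$, since the length bound $m$ alone guarantees the agreement. The only role of choosing $a$ is to make the argument match the sketch. The main thing to verify carefully is that $m$ is well defined before $L'$ is chosen. This holds because $A$ halts on $SI(a)$, and $L'$ is picked only afterwards, so no circularity arises.
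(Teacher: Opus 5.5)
Your proof is correct and follows the paper's own argument: run $A$ on $SI(a)$, with $a$ taken from the first non-empty query, let $m$ bound the lengths of the queried words, and note that $SI(\{a,b^{m+1}\})$ answers every query identically. Your remark that choosing $a$ from the first non-empty query is superfluous is also right, because the length bound alone forces agreement on all queried words (take $m=0$ if no non-empty word is queried).
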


\begin{proof}
    Suppose there is a learning algorithm $A$ that learns every language in $\mathcal{L}$ from membership queries. Since no shuffle ideal contains the empty word, $A$ must ask a query for at least one non-empty word. Let $a\in\Sigma$ be any letter occurring in the first non-empty word queried by $A$. 

    Consider the case where the target language is $SI(a)$; by definition, $A$ asks at most a finite number of further queries before halting with the conjecture $SI(a)$. The oracle answers all queries containing the letter $a$ with 1 and all others with 0.
    Let $n$ be the length of the longest word queried in this process, and let $b\in\Sigma$ be any letter distinct from $a$. Clearly, all responses by the oracle are consistent with the potential target language $SI\bigl(\{a,b^{n+1}\}\bigr)$. Thus $A$ fails to identify $SI\bigl(\{a,b^{n+1}\}\bigr)$---a contradiction. \qed 
\end{proof}

\setcounter{theorem}{5}
\begin{theorem}
The class $\mathcal{SI}$ is learnable from polynomially many membership and equivalence queries. In particular, there is a learner $A$ learning any language $SI(U)\in\mathcal{SI}$ with $O(m)$ equivalence queries and $O(n^2+mz^2)$ membership queries, where $m=|U|$ and $n=\min\{|u|\mid u\in U\}$ are not known to $A$ in advance, and $z$ is the length of the longest counterexample provided to $A$ in the learning process.
\end{theorem}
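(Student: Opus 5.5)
The learner maintains a set $B\subseteq U$ of generators it has already found. It always conjectures $SI(B)$ by equivalence query, and uses Algorithm~\ref{alg:distilling} to turn counterexamples into new generators.

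\textbf{Initialization.} The learner first finds one generator using membership queries only, as in the proof of Theorem~\ref{thm:principalMemQ}. For $\ell=1,2,\ldots$ it queries $s_\ell=(\sigma_1\cdots\sigma_{|\Sigma|})^\ell$. Let $\ell^*$ be the first index with answer $1$. Since $s_\ell$ contains every word of length $\ell$ as a scattered subword, $SI(U)$ meets $\Sigma^{\le \ell}$ exactly when some generator has length at most $\ell$. Hence $\ell^*=n$, the minimal generator length, so the learner never needs to know $n$ in advance. This costs $n$ queries. Distilling $s_n$, which has length $|\Sigma|n$, costs $O(n^2)$ membership queries by Lemma~\ref{lem:distilling} and yields a generator $u\in U$. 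Set $B=\{u\}$.

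\textbf{Main loop.} The learner asks the equivalence query $SI(B)$, which belongs to $\mathcal{SI}$, so the query is legal. If the answer is $1$, it stops. Otherwise it receives a counterexample $w$ in the symmetric difference of $SI(B)$ and $SI(U)$, with $|w|\le z$.

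The key invariant is $B\subseteq U$, which gives $SI(B)\subseteq SI(U)$. Therefore every counterexample is \emph{positive}: $w\in SI(U)\setminus SI(B)$. The learner distills $w$ with $O(|w|^2)=O(z^2)$ membership queries. The result is a generator $u'\in U$ with $u'\sqsubseteq w$.

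This $u'$ is new. If $u'\in B$, then $w\in SI(B)$, contradicting the choice of $w$. The learner adds $u'$ to $B$, and the invariant $B\subseteq U$ still holds.

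\textbf{Counting.} Each failed equivalence query adds a distinct element of $U$ to $B$, so at most $m-1$ fail. Once $B=U$, the next equivalence query succeeds. This gives at most $m$ equivalence queries, which is $O(m)$.

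The membership queries total $n+O(n^2)$ for initialization plus $O(z^2)$ per loop iteration, i.e.\ $O(n^2+mz^2)$. Neither $m$ nor $n$ is used by the algorithm.

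\textbf{Main point of care.} The crucial observation is that the invariant $B\subseteq U$ makes every counterexample positive. Without it, a negative counterexample could not be distilled. The remaining requirement is to distill into a generator actually lying outside $B$, which the argument above secures. Everything else is routine bookkeeping.
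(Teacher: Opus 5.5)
Your proof is correct and follows the paper's argument: you find a first generator by querying $s_\ell$ for increasing $\ell$ and distilling, then repeatedly pose the equivalence query $SI(B)$, using the invariant $B\subseteq U$ to make every counterexample positive and distilling it into a generator outside $B$. One small slip: the claim $\ell^*=n$ is false in general, because $s_\ell$ may contain a generator longer than $\ell$ (e.g., for $\Sigma=\{a,b\}$ and $U=\{ab\}$, already $s_1=ab$ is positive while $n=2$); only $\ell^*\le n$ holds, which is what the paper uses and is all your bound needs (at most $n$ queries plus $O((|\Sigma|\ell^*)^2)=O(n^2)$ distillation queries).
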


\begin{proof}
  Let $\Sigma=\{\sigma_1,\ldots,\sigma_{|\Sigma|}\}$ and suppose $SI(U)$ is the unknown target language. 
  In order to find a first generator in $U$, the learner $A$ asks  membership queries of the form $s_{n'}:=(\sigma_1\cdots\sigma_{|\Sigma|})^{n'}$,  for $n'=1,2,3,\ldots$, until it receives the answer 1 for the first time, say for query $s_{n^*}$. Then $s_{n^*}$ contains a generator from $U$ as a scattered subword. 
  
  Now, 
  $A$ distills $s_{n^*}$ into a generator $u\in U$, where Algorithm~\ref{alg:distilling} is used as a subroutine and $u$ is its output. (Note that $u$ may not be a shortest generator in $U$, but $n^*$ is at most the length $n$ of a shortest generator in $U$.)
  So far, $A$ has consumed $O(|\Sigma|^2n^2)$ membership queries in total.
  Next, $A$ sets $U^*=\{u\}$ and executes the following instructions. (Throughout this process, $SI(U^*)$ will be a subset of the target language $L$.)
  \begin{enumerate}
      \item Ask an equivalence query for $SI(U^*)$. 
      \item If the answer is 1, stop and output $L=SI(U^*)$. If the answer is $(0,w)$, then $w\in L\setminus SI(U^*)$, $w$ contains a generator $u'\in U\setminus SI(U^*)$ as a scattered subword, and $w$ does not contain any generator from $U^*$ as a scattered subword, since $SI(U^*)\subseteq L$. 
      \item Distill $w$ into a generator $u'$. Since $w$ does not contain any element of $U^*$ as a scattered subword, we obtain $u'\in U\setminus U^*$, i.e., $u'$ is a newly found generator.
      \item Add $u'$ to $U^*$ and goto 1.
  \end{enumerate}
  For each of the $m-1$ equivalence queries, distillation costs $O(z^2)$ membership queries, for a total of $O(mz^2)$ membership queries after the first generator is found. Overall, the learner has consumed $O(n^2+mz^2)$ membership queries. This proves the theorem. \qed 
\end{proof}

\begin{proposition}
Let $d$ be either the edit distance or any length-compatible distance function. Then the  class of principal shuffle ideals over $\Sigma$ is learnable from a single contrastive query w.r.t.\ $d$.
\end{proposition}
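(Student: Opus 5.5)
Query the empty word $\varepsilon$ once. No shuffle ideal $SI(u)$ with $u\in\Sigma^+$ contains $\varepsilon$, so the answer is $(0,w')$. Here $w'\in SI(u)$ minimizes $d(\varepsilon,\cdot)$ over $SI(u)$. The claim is that $w'=u$ in both cases, so the learner outputs $SI(w')$.

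\emph{Length-compatible $d$.} Every word of $SI(u)$ has length at least $|u|$. Let $w''\in SI(u)$ with $|w''|>|u|$. Length-compatibility applied with $w_1=\varepsilon$, $w_2=u$, $w_3=w''$ gives $d(\varepsilon,w'')>d(\varepsilon,u)$. Hence every minimizer has length exactly $|u|$. The only word of length $|u|$ containing $u$ as a scattered subword is $u$ itself, so $w'=u$.

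\emph{Edit distance.} For any word $x$ we have $d_e(\varepsilon,x)=|x|$, since $|x|$ insertions suffice and each operation changes the length by at most one. The minimizers over $SI(u)$ are therefore exactly the shortest words of $SI(u)$, and by the same subword argument the only such word is $u$. So again $w'=u$.

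In both cases the learner conjectures $SI(w')=SI(u)$ after a single query, which proves the statement. The only point needing care is the observation that the unique length-$|u|$ word containing $u$ as a scattered subword is $u$ itself; the rest is routine.
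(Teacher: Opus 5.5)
Your proof is correct and matches the paper's argument: query $\varepsilon$ and observe that the contrastive example must be the unique shortest word of $SI(u)$, namely $u$. The only difference is that you apply length-compatibility directly with $w_1=\varepsilon$, $w_2=u$. The paper instead handles $d_0$ and the edit distance, then transfers the result to all length-compatible distances via Corollary~\ref{cor:refinement}. Both routes are valid.
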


\begin{proof} Both for edit distance and for $d_0$, a learning algorithm can first ask a query for the empty word; the contrastive example will be a shortest string in the target language $L$, which is the unique generator of $L$.\footnote{One might argue that querying $\varepsilon$ gives the learner an unfair advantage; the learner knows in advance that $\varepsilon\notin L$, so perhaps queries ought to be restricted to $\Sigma^+$. In this case, under $d_0$, the learner only needs to ask a single query for a word $a$ for some $a\in\Sigma$ and will learn the shortest generator from the oracle's response. Under the edit distance, the learner would have to ask every word $a\in\Sigma^1$, increasing the number of queries to a constant $|\Sigma|$.} Thus the claim follows with Corollary~\ref{cor:refinement}. \qed
\end{proof}

\begin{proposition}
     Let $\Sigma$ be a binary alphabet, let
 $u\in\Sigma^k$ contain 
 two distinct symbols, and let $m\ge k$. 
 Then, $\Sigma^m\cap SI(u)$ has exactly $\binom{m-1}{k-1}+1$ lexicographic gaps.
 \end{proposition}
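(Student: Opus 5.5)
The plan is to count the maximal lexicographic runs of \emph{members} of $B:=\Sigma^m\cap SI(u)$ rather than the gaps directly. Write $\Sigma=\{a,b\}$ with $a<b$.

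\textbf{Reduction to blocks.} Since $u$ contains both letters, neither $a^m$ nor $b^m$ contains $u$. So the first and last words of $\Sigma^m$ both lie in $\Sigma^m\setminus B$. Runs of non-members and runs of members therefore alternate, starting and ending with a non-member run. Hence the number of gaps equals $\beta(m,u)+1$, where $\beta(n,v)$ denotes the number of maximal lexicographic intervals (``blocks'') of $\Sigma^n\cap SI(v)$. It thus suffices to prove the following general claim, which drops the two-letter assumption: for every nonempty $v\in\Sigma^*$ and every $n\ge 1$,
\[
\beta(n,v)=\binom{n-1}{|v|-1},
\]
with the convention that the binomial coefficient is $0$ when $|v|>n$. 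That convention is consistent, since then $SI(v)\cap\Sigma^n=\emptyset$.

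\textbf{Induction on $n$.} For the base case $|v|=1$, say $v=c$, the set $\Sigma^n\cap SI(c)$ is all of $\Sigma^n$ except $\bar c^{\,n}$, where $\bar c$ is the other letter. Because $\bar c^{\,n}$ is the first or last word of $\Sigma^n$, this set is a single block, matching $\binom{n-1}{0}=1$. The case $n=1$ follows similarly.

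For the inductive step, let $|v|=k\ge 2$ and write $v=cv'$. Lexicographically, $\Sigma^n$ is the $a$-half $a\Sigma^{n-1}$ followed by the $b$-half $b\Sigma^{n-1}$, and inside each half the order is the lexicographic order of the suffix $x\in\Sigma^{n-1}$. By leftmost (greedy) embedding, $cx\in SI(v)$ iff $x\in SI(v')$, and $\bar c x\in SI(v)$ iff $x\in SI(v)$. So the half beginning with $c$ contributes $\beta(n-1,v')=\binom{n-2}{k-2}$ blocks, and the other half contributes $\beta(n-1,v)=\binom{n-2}{k-1}$ blocks. If no block straddles the boundary between the halves, Pascal's rule gives $\beta(n,v)=\binom{n-1}{k-1}$.

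\textbf{Main obstacle: the boundary.} The delicate step is to verify that the block counts of the two halves simply add. The two words adjacent across the boundary are $ab^{n-1}$ and $ba^{n-1}$. We have $ab^{n-1}\in SI(v)$ iff $v\in\{a,\varepsilon\}b^*$, and $ba^{n-1}\in SI(v)$ iff $v\in\{b,\varepsilon\}a^*$. Both hold only if $|v|\le 1$, which is excluded when $k\ge 2$. So no merge occurs and the counts add.

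I would also double-check the degenerate subcases inside the induction:
\begin{itemize}
\item when $v'$ is a power of a single letter, the claim is still covered because it was stated for all nonempty $v$;
\item when $k>n$, both sides are $0$;
\item when $k=n$, the set $\Sigma^n\cap SI(v)=\{v\}$ is one block, and $\binom{n-1}{n-1}=1$.
\end{itemize}
Applying the claim with $n=m$ and $v=u$ and adding $1$ yields $\binom{m-1}{k-1}+1$ gaps.
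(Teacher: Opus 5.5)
Your proof is correct and follows essentially the same route as the paper's: the first-letter decomposition $T(cv,m)=c\,T(v,m-1)\cup\bar c\,T(cv,m-1)$, and the observation that $ab^{m-1}$ and $ba^{m-1}$ cannot both contain a word of length at least $2$, so no block crosses the middle. From there you both apply Pascal's rule and add $1$ because $a^m$ and $b^m$ are non-members. (A minor point: your characterization of when $ab^{n-1}\in SI(v)$ is only an ``only if'' because of the length bound $|v|\le n$, but that is the direction you actually use; your $0$-convention for $|v|>n$ is a harmless generalization of the paper's boundary conditions $I(1,m)=I(k,k)=1$.)
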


\begin{proof} Let $\Sigma=\{a,b\}$ with $a<b$, and let $u$ contain both the symbol $a$ and the symbol~$b$.

    Let $T(u,m)=SI(u)\cap\Sigma^m$. A lexicographic interval in $\Sigma^m$ is a set of consecutive words with respect to the fixed lexicographic order. The set $T(u,m)$ can therefore be partitioned uniquely into maximal lexicographic intervals; let $I(u,m)$ denote their number.
    We first show that $I(u,m)=\binom{m-1}{k-1}$ for every $u\in\Sigma^k$.
    
    Write $u=cv$, where $c\in\{a,b\}$, $v\in\Sigma^{k-1}$, and let $\bar c$ denote the other letter in $\Sigma$. A word in $\Sigma^m$ belongs to $T(cv,m)$ if and only if either it starts with $c$ and its suffix of length $m-1$ contains $v$ as a scattered subword, or it starts with $\bar c$ and its suffix contains $cv$ as a scattered subword. Hence $T(cv,m)=c\,T(v,m-1)\cup \bar c\,T(cv,m-1)$. 
    Prefixing by a fixed letter preserves the lexicographic order within the corresponding first-letter block, so the two components contribute $I(v,m-1)$ and $I(cv,m-1)$ maximal intervals, respectively. The only possible merger between them could occur across the boundary between $a\Sigma^{m-1}$ and $b\Sigma^{m-1}$. The two words at this boundary are $ab^{m-1}\qquad\text{and}\qquad ba^{m-1}.$ 
    They cannot both belong to $T(cv,m)$, since no binary word $cv$ of length at least $2$ is a scattered subword of both. Hence no maximal interval can cross this boundary, and therefore $I(cv,m)=I(v,m-1)+I(cv,m-1)$.    
    
    By induction, $I(u,m)$ depends only on $k=|u|$. With a slight abuse of notation, we may therefore write it as $I(k,m)$. Then $I(k,m)=I(k-1,m-1)+I(k,m-1)$ for $m>k\ge 2$, with boundary conditions $I(1,m)=1$ for all $m\ge 1$ and {$I(k,k)=1$} for all $k\ge 1$. Thus $I(k,m)=\binom{m-1}{k-1}$.
    
    Now, since $u$ contains both $a$ and $b$, neither $a^m$ nor $b^m$ belongs to $SI(u)\cap\Sigma^m$. Hence $T(u,m)$ contains neither the lexicographically smallest nor the largest word of $\Sigma^m$. Therefore its complement in $\Sigma^m$ consists of exactly one more maximal lexicographic interval than $T(u,m)$, i.e., $\binom{m-1}{k-1}+1$. By definition, these maximal intervals of the complement are precisely the lexicographic gaps of $SI(u)\cap\Sigma^m$.\qed
\end{proof}

\setcounter{theorem}{8}
\begin{theorem}
    Let $k\in\mathbb{N}$ be a constant and let $\mathcal{L}_k$ be 
    the class of shuffle ideals $SI(U)$ for which $\max_{u\in U}|u| - \min_{u\in U}|u|\le k$. Then $\mathcal{L}_k$ is learnable with polynomially many contrastive queries w.r.t.\ $d_{\mathrm{slex}}$. 
    \end{theorem}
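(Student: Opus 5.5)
The plan is to combine the midpoints-over-gaps strategy from the proof of Theorem~\ref{thm:poly-fixed-length} with distillation (Algorithm~\ref{alg:distilling}), running it over the lengths $m=n_0,n_0+1,\ldots,n_0+k$, where $n_0=\min_{u\in U}|u|$. Every contrastive query also answers a membership query, so Algorithm~\ref{alg:distilling} can be simulated, and Lemma~\ref{lem:distilling} applies.

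\textbf{Step 1 (first generator).} Query $\varepsilon$. Since $\varepsilon\notin L$ and $d_{\mathrm{slex}}$ refines $d_0$, the contrastive example is a shortest word in $L$, i.e.\ a generator $u$ with $|u|=n_0$. Because the learner knows the constant $k$, it now knows that all generators lie in $\Sigma^{n_0}\cup\cdots\cup\Sigma^{n_0+k}$. Set $W=\{u\}$.

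\textbf{Step 2 (search at length $m$).} For $m=n_0,\ldots,n_0+k$ in increasing order, let $B=\Sigma^m\cap SI(W)$ and run the gap procedure on the lexicographic gaps of $B$ in $\Sigma^m$. Since $m\ge n_0$, we have $L\cap\Sigma^m\ne\emptyset$, so the contrastive example for any negative query $q\in\Sigma^m$ has length exactly $m$. Hence the three cases of the proof of Theorem~\ref{thm:poly-fixed-length} carry over verbatim.
\begin{itemize}
\item In Cases~1 and~2 we obtain a word $x\in(L\cap\Sigma^m)\setminus SI(W)$. Distilling $x$ yields a generator $g\sqsubseteq x$. Since $x$ avoids every word of $W$ and $SI(W)$ is upward closed, $g\notin W$. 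We add $g$ to $W$ and restart the search at the current length $m$.
\item In Case~3 the contrastive example lies in $B$, and the symmetric-distance argument discards the gap using at most one extra query.
\end{itemize}
When all gaps at length $m$ are discarded, we know $L\cap\Sigma^m=SI(W)\cap\Sigma^m$. In particular, every generator of length $m$ is in $W$: such a generator is in $L\cap\Sigma^m$, hence covers some $w\in W$, and reducedness of $U$ forces it to equal $w$. After $m=n_0+k$ we have $W=U$ and output $SI(W)$.

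\textbf{Step 3 (counting queries).} There are at most $|U|+k+1$ runs of the gap procedure, and $|U|$ distillations of cost $O((n_0+k)^2)$ each. So everything reduces to bounding the number of gaps of $B=\bigcup_{w\in W}(SI(w)\cap\Sigma^m)$. This number is at most $1+\sum_{w\in W}I(w,m)$, where $I(w,m)$ is the number of maximal lexicographic intervals of $SI(w)\cap\Sigma^m$. This bound is the main obstacle, because Proposition~\ref{prop:binaryIntervalsComplement} handles only binary alphabets.

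To handle a general alphabet, I would redo its recursion. Write $w=c\,v$ with $c\in\Sigma$. Then
\[
SI(w)\cap\Sigma^m \;=\; c\,\bigl(SI(v)\cap\Sigma^{m-1}\bigr)\;\cup\;\bigcup_{c'\ne c} c'\,\bigl(SI(w)\cap\Sigma^{m-1}\bigr),
\]
and the first letter separates these pieces into consecutive blocks of the lexicographic order. Merging of intervals across block boundaries only decreases the count, so
\[
I(|w|,m)\;\le\; I(|w|-1,m-1)+(|\Sigma|-1)\,I(|w|,m-1).
\]
By induction this gives
\[
I(|w|,m)\;\le\;\binom{m-1}{|w|-1}\,|\Sigma|^{\,m-|w|}\;=\;\binom{m-1}{m-|w|}\,|\Sigma|^{\,m-|w|}.
\]
Only generators with $|w|\le m$ contribute, and for these $0\le m-|w|\le k$. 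Hence the bound is at most $(|\Sigma|\,m)^k$, which is polynomial for constant $k$ and fixed $\Sigma$. Therefore each run of the gap procedure uses $O(|U|\,(|\Sigma|(n_0+k))^{k})$ queries, and the total number of queries is polynomial in $\|U\|$.
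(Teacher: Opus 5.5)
Your proof is correct and has the same architecture as the paper's. You query $\varepsilon$ for a shortest generator, sweep $m=n_0,\ldots,n_0+k$ in increasing order, and run midpoints-over-gaps on the gaps of $B=\Sigma^m\cap SI(W)$. Two things differ.

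First, your distillation step is harmless but unnecessary. Shorter lengths are finished before length $m$ is searched, so every generator of length less than $m$ is already in $W$. Hence any positive word of length $m$ outside $SI(W)$ is itself a new generator, which is exactly how the paper argues.

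Second, you bound the number of gaps by generalizing the interval recursion of Proposition~\ref{prop:binaryIntervalsComplement} to arbitrary alphabets. The paper uses a plain cardinality count instead: a set $B\subseteq\Sigma^m$ has at most $|B|+1$ gaps, and a word $w$ with $m-|w|\le k$ is a scattered subword of at most $\binom{m}{m-|w|}|\Sigma|^{m-|w|}$ words of length $m$, so $|B|$ itself is polynomial. The step you flag as the main obstacle therefore disappears. Your interval bound is only marginally sharper and does not change polynomiality.

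Your recursion is still valid, with one caveat. For $|\Sigma|\ge 3$ the interval count depends on $w$, not just on $|w|$: with $a<b<c$ and $m=2$, $SI(a)\cap\Sigma^2$ has two maximal intervals but $SI(b)\cap\Sigma^2$ has three. So $I(|w|,m)$ must be read as a maximum over words of length $|w|$. The induction also needs the base cases $d=m$ and $d=1$ handled directly.
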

    \begin{proof}
       The shortest generator of length, say, $n$ can be obtained by querying $\varepsilon$. 
The learner then searches for other generators in increasing order of their length. 
Only lengths up to $n+k$ need to be considered.

Suppose that the learner is currently searching for generators of length $m\ge n$ and has already found all shorter generators. These shorter generators determine a set $B_m\subseteq\Sigma^m$ of words that are already known to be positive; generators of length $m$ found so far are included in $B_m$ as well. Hence any positive word outside $B_m$ is a new generator of length $m$.

We claim $B_m$ has only polynomially many lexicographic gaps. Indeed, every previously found generator has length at most $k$ less than $m$, so, since  $k$ is fixed, it can occur as a scattered subword of only polynomially many words of length $m$. Thus $|B_m|$, and consequently the number of its gaps, is polynomially bounded.

The learner can therefore apply the midpoints-over-gaps procedure of Theorem~\ref{thm:poly-fixed-length} to these gaps. Each application either finds a new generator of length $m$ or establishes that no further such generator exists. Iterating over the relevant lengths identifies all generators using polynomially many contrastive queries and polynomial time. \qed
    \end{proof}
    
Before \textbf{Corollary \ref{cor:complementApplied}}, we claimed that it is easy to \textbf{distinguish shuffle ideals from complements of shuffle ideals} by contrastive queries w.r.t.\ any distance function, even if we consider the complement w.r.t.\ $\Sigma^+$. To see this, consider a learner that asks queries for all words of the form $a_i$, where $a_i\in\Sigma$, as well as, possibly, one query of the form $a_ia_j$ where $a_i\in L$ and $a_j\notin L$. If no $a_i$ belongs to $L$, then $L$ is a shuffle ideal iff a positive contrastive example (of length $>1$) is given. If all $a_i$ belong to $L$, then $L$ is a shuffle ideal iff no negative contrastive example is given. If $a_i\in L$ and $a_j\notin L$, then $L$ is a shuffle ideal iff $a_ia_j\in L$. 

\end{document}